\documentclass{article}
\usepackage{float}
\usepackage[utf8]{inputenc}   
\usepackage[english]{babel}      
\usepackage{amsmath, amssymb, amsthm, mathrsfs}     
\usepackage[a4paper, margin=3cm]{geometry} 
\usepackage{graphicx}         
\usepackage{natbib}
\usepackage{hyperref}         
\usepackage{tikz}
\allowdisplaybreaks
\usepackage{float} 
\usepackage{graphicx, pgfplots, xcolor}
\usepackage{bm}
\usepackage{adjustbox}
\definecolor{granate}{RGB}{128, 0, 64} 
\usepackage[T1]{fontenc}
\usepackage{subcaption}
\usepackage{booktabs}
\usepackage{amsthm}
\usepackage{xcolor}
\newtheorem{theorem}{Theorem}

\newtheorem{definition}[theorem]{Definition}

\newtheorem{proposition}[theorem]{Proposition}
\newtheorem{remark}[theorem]{Remark}
\newcommand{\newproof}[2]{%
  \newenvironment{#1}{\begin{proof}[#2]}{\end{proof}}%
}
\newproof{pot1}{Proof of Theorem \ref{th:theo_1}}
\newproof{pot2}{Proof of Theorem \ref{th:theo_2}}
\newproof{pot3}{Proof of Theorem \ref{th:theo_3}}
\newproof{pot4}{Proof of Theorem \ref{th:theo_4}}
\newproof{pot5}{Proof of Theorem \ref{th:theo_5}}
\newproof{pot6}{Proof of Theorem \ref{th:theo_6}}

\begin{document}

\title{\textbf{Robust Tests for Step-Stress Models under Exponential Lifetimes}}
\author{Mar\'{i}a Jaenada$^{(1)}, $ Juan Manuel Mill\'{a}n$^{(2)}$ and Leandro Pardo$^{(2)}$ \\
{\small $^{(1)}$ Department of Statistics, O.R. and N.A., UNED, Madrid, Spain}\\
{\small $^{(2)}$ Department of Statistics and O.R., Complutense University of Madrid, Spain}
}
\date{}
\maketitle
\begin{abstract}
Highly reliable products with extended lifetimes present a  challenge in reliability analysis: obtaining enough failure data under normal operating conditions is often incompatible with reasonable time  and cost constraints. Step-Stress Accelerated Life Tests (SSALTs) offer  a practical solution by progressively increasing stress levels to  accelerate product degradation, allowing results to be extrapolated to  normal conditions. However, the small sample sizes and Type-I censoring  inherent to these experiments render classical maximum likelihood-based inference  vulnerable to data contamination. 
Robust point estimation has been studied in the literature. However, robust test statistics have not yet been developed in this context. In this paper, we propose robust test statistics for SSALTs under exponential lifetime distributions, based on the minimum density power divergence estimator (MDPDE). 
This paper directly works with the exact failure times recorded before the end of the experiment. Once this limit time is reached, all surviving components are censored. This creates a mixed discrete-continuous distribution, which   is a major analytical breakthrough in the context of SSALT. 
We  introduce the restricted version of the MDPDE, establish its  asymptotic properties, and construct Z-type and Rao-type  test statistics for linear hypotheses on the model parameters. The proposed  tests are shown to maintain their nominal significance levels and statistical  power under data contamination, where classical MLE-based procedures fail. An  extensive  simulation study confirms the robustness gains of the  proposed methods, and a real data application illustrates their  practical value.
\end{abstract}

\section{Introduction}
Reliability analyses aim to infer the lifetime behavior of a device or system, that is, the time until failure under normal operating conditions. Many modern products and systems are designed with long mean times to failure (MTTF), often extending over several years. Under these circumstances, obtaining sufficient lifetime observations for statistical inference under operating conditions becomes challenging, making it difficult to draw conclusions about the lifetime behavior of the device (\citep{Nelson2004}).
To address this challenge, accelerated life tests (ALTs) are applied in industry by subjecting devices to higher-than-normal physical conditions that wear down the product and reduce its time to failure. For example, experimenters can increase environmental conditions such as temperature, voltage, pressure, load, humidity, or some combination to induce deterioration \citep{Nelson1990, Escobar2006}. ALT models require a relationship between lifetime behavior and stress load, allowing the extrapolation of results obtained under accelerated conditions to normal operating conditions. There exist mainly three types of ALTs designs used in practice, depending on how the stress is applied: Constant Stress ALTs (CSALTs), Step-Stress ALTs (SSALTs) and Progressive Stress ALTs (PSALTs). Among these,  \citet{Nelson1990} demonstrates that SSALTs can provide more accurate estimates than CSALTs under similar experimental conditions and \citet{Alhadeed2005} further highlights their advantages in terms of shorter experimental duration and reduced costs.

In SSALTs, the total experimental duration is divided into stress intervals defined by fixed time points $\tau_0 = 0 < \tau_1 < \tau_2$. During each interval, a different stress level is applied to the $N$ test units, with the stress level assumed to increase in discrete steps  to induce failure. The experiment concludes at a fixed time $\tau_2$, beyond which surviving units are right-censored. This design corresponds to a Type-I censoring scheme, which is particularly appealing in practice as it guarantees a predetermined experimental duration. Under this setup, the Cumulative Exposure (CE) model \citep{Sedyakin1966, Nelson1980} is commonly employed to relate the lifetime distribution under each stress level to the overall observed failure times, assuming that the remaining lifetime of a unit depends only on its current cumulative exposure to stress, regardless of how that exposure was accumulated.

A common lifetime distribution in SSALTs is the exponential distribution, which, despite its simplicity, remains widely used due to its analytical tractability and its natural interpretation in terms of a constant hazard rate. Inference for step-stress models under exponential lifetimes has been extensively studied in the literature; see, for instance, \citet{Balakrishnan2009} for exact inference under Type-I hybrid censoring, \citet{Miller1983, DeGroot1979} for foundational results, and \citet{Xiong1998} for inference under Type-I censoring.

Statistical inference for these models is typically carried out via the MLE, which possesses desirable asymptotic properties such as consistency and asymptotic normality. Moreover, widely used hypothesis tests such as the Wald test, the Rao score test, and the likelihood ratio test are based on the MLE or its restricted version (\citep{Rao1948, Wald1943}). However, it is well known that the MLE lacks robustness and can be severely affected by data contamination or the presence of outliers, leading to unreliable estimates and distorted test decisions. In reliability testing, where a small number of anomalous observations can arise from measurement errors, faulty units, or unexpected failure modes, this sensitivity is a serious practical concern.
To overcome this limitation, robust inferential methods based on divergence measures have been developed. In particular, the MDPDE, introduced by \citet{Basu1998}, provides a flexible family of estimators indexed by a tuning parameter $\beta \geq 0$ that governs the trade-off between robustness and efficiency: for $\beta = 0$ the MDPDE reduces to the MLE, while increasing $\beta$ yields estimators that progressively downweight the influence of outlying observations. This approach has been extended to the step-stress setting by \citet{Balakrishnan2023a}, who introduced robust MDPDEs for SSALTs under interval censoring and exponential lifetimes. The restricted version of this estimator was studied  in \citet{Balakrishnan2023c} to construct robust Rao-type tests. Robust MDPDEs under Weibull, Gamma, and Lognormal lifetimes have been discussed in \citet{Balakrishnan2023b, Balakrishnan2024a,Balakrishnan2024b,Balakrishnan2025,Balakrishnan2026a,Balakrishnan2026b}.

In this paper, we work directly  with the exact  failure times recorded before the end of the experiment. 
Once the termination time is reached, all surviving components are right-censored. This gives rise to a mixed discrete-continuous model: one part of the sample consists of continuous observations corresponding to exact failure times, whereas the remaining observations contribute a discrete probability mass at the censoring time. Developing a robust statistical framework for this mixed discrete–continuous setting represents a major analytical breakthrough compared with the fully continuous or fully discrete cases.
The mixed continuous-discrete model was initially proposed in \citet{jaenada2025}, where robust point estimation with exponential and Weibull distributions was considered. In this paper, we consider the problem of robust testing for SSALTs under exponential lifetime distributions and Type-I censoring. Specifically, we construct robust Z-type, Wald-type, and Rao score-type test statistics based on the MDPDE and its restricted version (RMDPDE), and we study their asymptotic distributions under the null hypothesis as well as their behavior under data contamination. The theoretical results are complemented by an extensive Monte Carlo simulation study that illustrates the robustness gains over the classical MLE-based tests.

The rest of this paper is organized as follows. Section~\ref{sec:sec2} presents the step-stress model and the CE model under exponential lifetimes and Type-I censoring. Section~\ref{sec:sec3} introduces the restricted version of the MDPDE along with its asymptotic properties. Sections~\ref{sec:sec4} and \ref{sec:sec6}  develops robust Z-type, and Rao test statistics. Section~\ref{sec:sec7} presents the simulation study. Section~\ref{sec:sec8} presents a real data analysis. In Section~\ref{sec:sec9} some concluding remarks are presented.

\section{The minimum density power divergence estimator} \label{sec:sec2}

\bigskip

%
Before defining the MDPDE, let us justify the use of a mixed distribution in this context. The standard Step-Stress Accelerated Life Testing (SSALT) scheme with two stress levels is structured as follows. 

Initially, a total of $N$ identical units are placed on test under an initial stress level $x_1$ during the time interval $(0, \tau_1]$. During this first stage, $n_1$ units fail, and their exact failure times are recorded as the ordered statistics $t_{1:N}, \dots, t_{n_1:N}$. At the first censoring threshold $\tau_1$, the stress level is increased to $x_2$ for the remaining $N - n_1$ surviving units. This second stage of the experiment runs during the interval $(\tau_1, \tau_2]$, where an additional $n_2$ units fail, yielding the exact failure times recorded as $t_{n_1+1:N}, \dots, t_{n_1+n_2:N}$. Finally, at the end of the test (the terminal censoring time $\tau_2$), the experiment is terminated, leaving $N - n_1 - n_2$ surviving components whose exact failure times are right-censored at $\tau_2$.

Under an exponential lifetime assumption, the applied stress levels $x_1$ and $x_2$ scale the units' lifetimes through their respective scale parameters $\lambda_1$ and $\lambda_2$. Consequently, the scale parameter vector is denoted by $\bm{\lambda} = (\lambda_1, \lambda_2)^T$. In general, for a constant stress level, the cumulative distribution function cdf is given by $F(t) = 1 - \exp(-t/\lambda)$. However, under the cumulative exposure model (CEM) framework with two stress steps, the cumulative distribution function of the lifetime of a unit, denoted by $F^*(t|\bm{\lambda})$, is expressed as:

\begin{equation}
F^*(t|\bm{\lambda})=
\begin{cases}
0, & t < 0, \\
F^*_{1}(t|\bm{\lambda}) = 1-\exp\left(-\frac{t}{\lambda_1} \right), & 0 \leq t < \tau_1, \\
F^*_{2}(t + h|\bm{\lambda}) = 1-\exp\left(-\frac{1}{\lambda_2}\left(t + \frac{\lambda_2}{\lambda_1}\tau_1 - \tau_1\right)\right), & \tau_1 \leq t < \infty.
\end{cases}
\label{eq:funFstar}
\end{equation}

The time-shift parameter $h$ represents the equivalent exposure time and is introduced to guarantee the continuity of the distribution function at the stress-change point $\tau_1$, ensuring that $F^*_1(\tau_1|\bm{\lambda}) = F^*_2(\tau_1+h|\bm{\lambda})$. Solving this boundary condition yields:
\begin{equation}
h = \frac{\lambda_2}{\lambda_1}\tau_1 - \tau_1.
\label{eq:h_factor}
\end{equation}

This piecewise structure highlights the mixed nature of the distribution when censoring is applied. While exact failure times $t_{i:N}$ are observed continuously on the intervals $(0, \tau_1]$ and $(\tau_1, \tau_2]$, any unit surviving beyond $\tau_2$ contributes to a discrete probability mass of size $1 - F^*(\tau_2|\bm{\lambda})$ concentrated precisely at the censoring point $\tau_2$.
While the units under test are subject to continuous monitoring throughout the test, the data available upon the study's conclusion are partitioned: exact failure times are recorded for some, but for the surviving units, only their surviving status is observed. 
As a result, the observable cdf, denoted by $F_T$, is characterized as a mixed distribution. It comprises a continuous component over the interval $(0, \tau_2)$ and a discrete point mass at time $\tau_2$ equivalent to $1 - F^*_2(\tau_2+h|\bm{\lambda})$. This mass represents the survival probability at the experiment's termination time. Specifically, under the assumption of exponential lifetimes, the observable cdf is defined as
\begin{equation}
F_T(t|\bm{\lambda})=
\begin{cases}
0 & \quad t<0
\\
F_{1}(t|\bm{\lambda})=1-\exp\left(-\frac{t}{\lambda_1} \right) \quad& 0\leq t <\tau_1 
\\
F_{2}(t+h|\bm{\lambda})=1-\exp\left(-\frac{1}{\lambda_2}\left(t+\frac{\lambda_1}{\lambda_2}\tau_1-\tau_1\right)\right) \quad &\tau_1 \leq t < \tau_2
\\
1 \quad & \tau_2 \leq t,
\end{cases}
\label{eq:funF}
\end{equation}
where $\bm{\lambda} =\left(\lambda_1, \lambda_2 \right)$ are the exponential scale parameters under constant stress levels $x_1$ and $x_2$.

By a similar reasoning, the true underlying cdf $G_{T}$ of the lifetime truncated at the experiment endpoint, is represented as a piecewise distribution of the form
\begin{equation}
	G_T(t)=
	\begin{cases}
		0 \quad & t<0
		\\
		G_1(t) \quad & 0 \leq t <\tau_1
		\\
		G_2(t) \quad & \tau_1 \leq t < \tau_2
		\\
		1 \quad & \tau_2 \leq t,
	\end{cases}
\label{eq:funG}
\end{equation}
where $G_1(\cdot)$ and $G_2(\cdot)$ denote the cdfs under constant stress levels $x_1$ and $x_2$, respectively.
Note that, rather than assuming a specific parametric model to relate the cdf across different stress conditions, we simply allow the distribution to vary with changes in the stress level. 

For the sake of notation simplicity, we define
\begin{align*}
	f_1(t|\bm{\lambda})=\frac{\partial F_1(t|\bm{\lambda})}{\partial t} \quad \quad g_1(t)=\frac{\partial G_1(t)}{\partial t} \quad & 0 < t <\tau_1 
	\\
	f_2(t+h|\bm{\lambda})=\frac{\partial F_2(t+h|\bm{\lambda})}{\partial t} \quad \quad g_2(t)=\frac{\partial G_2(t)}{\partial t} \quad & \tau_1 < t <\tau_2 .
\end{align*}
These functions are not proper density functions themselves, but are proportional to them up to a constant factor, as they represent the continuous component of the mixed distribution.

Given the failure times \( t_{1:N}, \dots, t_{n_1:N}, t_{n_1+1:N}, \dots, t_{n_1+n_2:N} \), the log-likelihood of the SSALT model 
is given by

\begin{equation*}
	\begin{aligned}
	\ell(\bm{\lambda}) =&\log\left(N!\right) -\log \left((N-n_1-n_2)!\right)\\
	& + \sum_{i=1}^{n_1} \log\left(g_1(t_{i:N}|\bm{\lambda})\right) \\
	& + \sum_{i=n_1+1}^{n_1+n_2} \log\left(g_2(t_{i:N} + h|\bm{\lambda})\right) \\
	& +(N-n_1-n_2)\log\left( 1 - G_2(\tau_2 + h|\bm{\lambda}) \right),
	\end{aligned}
	\label{eq:likelihood}
\end{equation*}
and the MLE for the SSALT is defined as
\begin{equation}
\left(\hat{\lambda}_1^{MLE}, \hat{\lambda}_2^{MLE} \right) = \arg \max_{ \bm{\lambda} \in \mathbb{R}^{+} \times \mathbb{R}^{+}} \ell (\bm{\lambda}).
\end{equation}
It is worth noting that the MLEs of $\lambda_1$ and $\lambda_2$ exist simultaneously if and only if at least one failure is observed in each stress interval. This condition is assumed to hold throughout the paper.

The MDPDE is based on an estimation procedure that consists in minimizing a statistical distance between the empirical data distribution and the parametric model. To motivate this framework, we first introduce the Kullback-Leibler (KL) divergence. This introduction is mathematically justified by the fact that minimizing the KL divergence with respect to the parameter vector $\bm{\lambda}$ is equivalent to maximizing the log-likelihood function of the sample. Consequently, the MLE can be formally obtained as a minimum divergence estimator under the KL divergence. This crucial link allows us to view classical estimation through the theory of divergence minimization, serving as the baseline to subsequently introduce robust generalizations such as the MDPDE.

From the above, we can define the KL divergence between the mixed distributions $F_T(\cdot|\bm{\lambda})$ and $G_T(\cdot)$, as defined in \eqref{eq:funF} and \eqref{eq:funG}, is given by,
\begin{equation}
	\begin{aligned}
		d_{KL}\left(G_T\left(\cdot\right), F_T\left(\cdot|\bm{\lambda}\right)\right)=&\int_0^{\tau_1} g_1(t) \log\left(\frac{g_1(t)}{f_1(t|\bm{\lambda})}\right) dt +\int_{\tau_1}^{\tau_2}g_2(t)\log\left(\frac{g_2(t)}{f_2( t+h|\bm{\lambda})}\right)dt\\
		&+\left(1-G_2\left(\tau_2\right)\right)\log\left(\frac{1-G_2(\tau_2)}{1-F_2(\tau_2+h|\bm{\lambda})}\right).
	\end{aligned}
\label{eq:kl_def}
\end{equation}
The KL divergence for mixed distributions is defined by combining the contributions of their continuous and discrete components, following an approach previously proposed in the context of Shannon entropy ( \citet{Nair2007}).


The optimal estimate of $\bm{\lambda}$ based on the KL divergence is such that the assumed parametric model fits the true distribution as closely as possible. Since the true distribution functions $G_1(\cdot)$ and $G_2(\cdot)$ are unknown, given a random sample from the experiment $(t_{1:N}, \dots, t_{n_1+n_2:N})$, we approximate them using their respective empirical distribution functions. Consequently, we obtain
\begin{equation}
	\begin{aligned}
		d_{KL}\left(G_T\left(\cdot\right), F_T\left(\cdot|\bm{\lambda}\right)\right)\approx&-\frac{1}{N}\sum_{i=1}^{n_1} \log(f_1(t_{i:N}|\bm{\lambda}))-\frac{1}{N}\sum_{i=n_1+1}^{n_1+n_2}\log\left(f_2(t_{i:N}+h|\bm{\lambda})\right)
		\\
		&-\frac{N-n_1-n_2}{N}\log\left(1-F_2(\tau_2+h|\bm{\lambda}\right)+ \text{Constant Terms}.
	\end{aligned}
\end{equation}

The previous expression is proportional, up to an additive constant, to the log-likelihood function. 
The estimator based on the KL divergence will be the one that minimizes such divergence, and therefore, the estimator is given by
\begin{align*}
\left(\hat{\lambda}_1^{KL}, \hat{\lambda}_2^{KL} \right) = \arg \min_{ \bm{\lambda} \in \mathbb{R}^{+} \times \mathbb{R}^{+}} d_{KL}\left(G_T\left(\cdot\right),F_T(\cdot|\bm{\lambda})\right)
\end{align*}
As is well known, the KL divergence yields the MLE within the minimum-distance framework.
%


The Density Power Divergence (DPD) is a widely adopted divergence measure for developing robust statistical inference methods from the minimum distance framework. It facilitates a balanced trade-off between efficiency and robustness, controlled by a tuning parameter $\beta$. Originally introduced by \citet{Basu1998}. The DPD has gained extensive use due to its capacity to produce estimators that maintain high efficiency under the model while significantly enhancing robustness against outliers.

Using similar arguments to those used above for mixed distributions, we define the DPD between $G_T(\cdot)$ and $F_T(\cdot|\bm{\lambda})$ as a combination of the divergences between their continuous and discrete components as follows
\begin{equation}
	\begin{aligned}
		d_{\beta}\left(G_T(\cdot), F_T(\cdot|\bm{\lambda})\right) &=\int_0^{\tau_1} \left(f_1(t|\bm{\lambda})^{\beta+1} - \left(1+\frac{1}{\beta} \right) f_1(t|\bm{\lambda})^{\beta} g_{1}(t)+\frac{1}{\beta}g_{1}(t)^{\beta+1} \right) dt
		\\
		&+\int_{\tau_1}^{\tau_2} \left(f_2(t+h|\bm{\lambda})^{\beta+1} - \left(1+\frac{1}{\beta} \right) f_2(t+h|\bm{\lambda})^{\beta} g_{2}(t)+\frac{1}{\beta}g_{2}(t)^{\beta+1} \right)dt
		\\
		&+\left(1-F_2(\tau_2+h|\bm{\lambda})\right)^{\beta+1}-\left(1+\frac{1}{\beta}\right)\left(1-F_2(\tau_2+h|\bm{\lambda})^{\beta}\right)\left(1-G_{2}(\tau_2)\right)
		\\
		&+\frac{1}{\beta}\left(1-G_{2}(\tau_2)\right)^{\beta+1}.
	\end{aligned}
\end{equation}
Now, given a random sample from the experiment, $(t_{1:N}, \dots, t_{n_1+n_2:N})$, and replacing the true (unknown) cdf with its empirical estimate, the empirical DPD is given, up to an additive constant, by
\begin{equation}
	\begin{aligned}
	H_{N}^{\beta}\left(G_T(\cdot), F_T(\cdot|\bm{\lambda})\right)& =  \int_{0}^{\tau_1}f_1(t|\bm{\lambda})^{\beta+1}dt +\int_{\tau_1}^{\tau_2}f_2(t+h|\bm{\lambda})^{\beta+1}dt+\left(1-F_2(\tau_2+h|\bm{\lambda})\right)^{\beta+1}
	\\
	&-\frac{\beta+1}{\beta N}\left\{ \sum_{i=1}^{n_1}f_1(t_{i:N}|\bm{\lambda})^\beta +\sum_{i=n_1+1}^{n_1+n_2}f_2(t_{i:N}+h|\bm{\lambda})^\beta \right.
	\\
	&\left.+ (N-n_1-n_2)\left(1-F_2(\tau_2+h|\bm{\lambda})\right)^{\beta}\right\}+\frac{1}{\beta}. 
	\end{aligned}
\end{equation}
\par
Consequently, the MDPDE for a SSALT model
for a fixed tuning parameter $\beta > 0$, is defined as
\begin{equation}
(\hat{\lambda}_1^{\beta}, \hat{\lambda}_2^{\beta})=\arg \min_{\bm{\lambda} \in \mathbb{R}^{+} \times \mathbb{R}^+} H_{N}^{\beta}(\bm{\lambda}).
\label{eq:dpd_min}
\end{equation}

Let us now derive explicit expression of the DPD-based loss for exponential lifetimes.
\begin{proposition}\label{prop:prop1}
Under exponential lifetimes with scale parameters $\bm{\lambda} = (\lambda_1, \lambda_2)$ as given in \eqref{eq:funF}, the DPD-based loss $H_{N}^{\beta}(\bm{\lambda})$ is given by
    \begin{equation}
    H_{N}^{\beta}(\bm{\lambda})=h_1(\bm{\lambda})+h_2(\bm{\lambda}),
    \end{equation}
    with
    \begin{align*}
    	h_1(\bm{\lambda})&=\frac{1}{\lambda_1^{\beta}(\beta+1) } - \frac{1}{\beta+1 } \exp \left( -\frac{\tau_1 }{\lambda_1}(\beta+1) \right) \left(\frac{1}{\lambda_1^\beta}-\frac{1}{\lambda_2^\beta} \right)
    	\\
    	&+\exp\left(-\frac{\tau_2+h}{\lambda_2}(\beta+1) \right) \left(1 - \frac{1}{\lambda_2^\beta (\beta+1)} \right),
    	\\
    	h_2(\bm{\lambda})&=- \frac{\beta+1}{\beta N} \left\{ \frac{1}{\lambda_1^\beta}\sum_{i=1}^{n_1}\exp\left( -\frac{t_{i:N}}{\lambda_1}\beta \right) 
    	+\frac{1}{\lambda_2^\beta} \sum_{i=n_1+1}^{n_1+n_2} \exp\left(-\frac{t_{i:N}+h}{\lambda_2}\beta \right) \right.
    	\\
    	& + \left. \left(N-n_1-n_2 \right) \exp\left(- \frac{\tau_2+h}{\lambda_2}\beta\right) \right\}.
    \end{align*} 
\end{proposition}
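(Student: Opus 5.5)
The plan is a direct substitution of the exponential forms into the empirical DPD loss $H_N^\beta(\bm{\lambda})$, followed by evaluation of the two integrals in closed form. First, from \eqref{eq:funF}, I write down the pieces of the mixed model explicitly:
\begin{align*}
f_1(t|\bm{\lambda}) &= \frac{1}{\lambda_1}\exp\left(-\frac{t}{\lambda_1}\right), \qquad
f_2(t+h|\bm{\lambda}) = \frac{1}{\lambda_2}\exp\left(-\frac{t+h}{\lambda_2}\right), \\
1-F_2(\tau_2+h|\bm{\lambda}) &= \exp\left(-\frac{\tau_2+h}{\lambda_2}\right).
\end{align*}
Throughout I take the time shift $h$ to be the one fixed by the continuity condition \eqref{eq:h_factor}, namely $h=\frac{\lambda_2}{\lambda_1}\tau_1-\tau_1$. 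This is the form in \eqref{eq:funFstar}; the ratio appears inverted in \eqref{eq:funF}, which I treat as a typo.

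Next I handle the data-dependent part, which requires no integration. Raising the three expressions above to the power $\beta$ gives $\lambda_1^{-\beta}e^{-\beta t/\lambda_1}$, $\lambda_2^{-\beta}e^{-\beta(t+h)/\lambda_2}$ and $e^{-\beta(\tau_2+h)/\lambda_2}$. Inserting these into the bracket multiplied by $-\frac{\beta+1}{\beta N}$ reproduces $h_2(\bm{\lambda})$ term by term. The additive constant $1/\beta$ does not depend on $\bm{\lambda}$, so I drop it, consistent with the statement that $H_N^\beta$ is defined up to an additive constant.

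For the model part $h_1(\bm{\lambda})$, I compute the two integrals. The first is
\[
\int_0^{\tau_1} f_1^{\beta+1}\,dt = \frac{1}{\lambda_1^{\beta}(\beta+1)}\Bigl(1-e^{-\tau_1(\beta+1)/\lambda_1}\Bigr).
\]
The second is
\[
\int_{\tau_1}^{\tau_2} f_2(t+h)^{\beta+1}\,dt=\frac{1}{\lambda_2^\beta(\beta+1)}\Bigl(e^{-(\tau_1+h)(\beta+1)/\lambda_2}-e^{-(\tau_2+h)(\beta+1)/\lambda_2}\Bigr).
\]
The key simplification is that $\tau_1+h=\lambda_2\tau_1/\lambda_1$, which gives $e^{-(\tau_1+h)(\beta+1)/\lambda_2}=e^{-\tau_1(\beta+1)/\lambda_1}$. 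The two terms involving $e^{-\tau_1(\beta+1)/\lambda_1}$ therefore combine into $-\frac{1}{\beta+1}e^{-\tau_1(\beta+1)/\lambda_1}\bigl(\lambda_1^{-\beta}-\lambda_2^{-\beta}\bigr)$. Finally, the term $-\frac{1}{\lambda_2^\beta(\beta+1)}e^{-(\tau_2+h)(\beta+1)/\lambda_2}$ is grouped with the discrete mass contribution $\bigl(1-F_2(\tau_2+h)\bigr)^{\beta+1}=e^{-(\tau_2+h)(\beta+1)/\lambda_2}$, which yields the factor $\bigl(1-\frac{1}{\lambda_2^\beta(\beta+1)}\bigr)$.

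The only step I expect to need care is this boundary simplification. It relies on using the correct $h$ from \eqref{eq:h_factor} rather than the inverted ratio appearing in \eqref{eq:funF}. Everything else is an elementary exponential integral plus regrouping of terms.
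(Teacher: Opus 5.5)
Your proposal is correct and is the same direct computation the paper relies on; the paper gives no separate proof, only the remark that the constant $1/\beta$ is dropped. Substituting the exponential forms, evaluating the two integrals, and using $\tau_1+h=\lambda_2\tau_1/\lambda_1$ to merge the boundary terms reproduces $h_1$ and $h_2$ exactly. You are also right that the ratio in \eqref{eq:funF} is inverted relative to \eqref{eq:h_factor}, and the paper's own later derivative formulas use $\frac{\lambda_2}{\lambda_1}\tau_1$, which confirms your reading.
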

\par
Note that the term $\frac{1}{\beta}$ has been removed, as it does not interfere in the minimization process.
As pointed out in \cite{jaenada2025}, the family of MDPDE can be extended at $\beta=0$, yielding the MLE defined.

To extrapolate results to normal operating conditions, a model relating the stress level, $x$, with the distribution parameter $\lambda$ is required. As is standard in the ALT literature, we assume a log-linear relationship between the stress level and the scale parameter of the underlying lifetime distribution:
\begin{equation}\label{eq:lambda_eq}
    \lambda_i = \exp(a_0 + a_1 x_i), \quad i=1, 2.
\end{equation}
This log-linear relationship implies that the logarithm of the scale parameter depends linearly on the applied stress level, since $\log(\lambda_i) = a_0 + a_1 x_i$. Under this formulation, the parameters to be estimated are no longer the scale parameters $\bm{\lambda} = (\lambda_1, \lambda_2)^T$ directly, but rather the regression coefficients $\bm{a} = (a_0, a_1)^T$. 
This formulation represents a direct reparameterization of the model, maintaining the same number of parameters (two unknown parameters in both cases). Consequently, once the estimates of $a_0$ and $a_1$ are obtained, they can be directly used for extrapolation to any normal operating stress condition $x_0$, allowing for the calculation of the corresponding scale parameter $\lambda_0 = \exp(a_0 + a_1 x_0)$.

\section{The Restricted MDPDE (RMDPDE)} \label{sec:sec3}

Hypothesis testing will be considered over restricted subspaces defined by linear functions of the parameters. This framework is highly advantageous as it allows us to formally test specific coefficient values and determine whether the stress factor has a statistically significant effect on the product's lifetime (e.g., by testing the null hypothesis $H_0: a_1 = 0$).
Given the following restriction for $\bm{a}=(a_0,a_1)^{\intercal}$:
\begin{align}
g(a_0,a_1) &= \bm{m}^{\intercal} 
\begin{pmatrix}
  a_{0} \\
  a_{1}
\end{pmatrix} - d = 0,
\label{eq:RMDPDE_eq}
\end{align}
where $\bm{m}=(m_0,m_1)^{\intercal} \in \mathbb{R}^2$ and $d \in \mathbb{R}$. RMDPDE is defined by
\begin{align}
\tilde{\bm{a}}^{\beta}&= \arg \min_{\bm{a} \in A_0} d_{\beta}\left(G_{T}(\cdot),F_{T}(\cdot|\bm{\lambda})\right),
\end{align}
where $A_0$ is the subspace of solutions of  \eqref{eq:RMDPDE_eq}.
Using the method of Lagrange multipliers and the functions used in \ref{prop:prop1},  the Lagrangian is given by
\begin{align*}
L(a_0,a_1,\tilde{\mu})&=h_{1}(a_0,a_1)+h_{2}(a_0,a_1)+\tilde{\mu}(m_0a_0+m_1a_1-d).
\end{align*}
and the first order conditions are
\begin{align}
\frac{\partial h_{1}(a_0,a_1)}{\partial a_0}+\frac{\partial h_{2}(a_0,a_1)}{\partial a_0}+\tilde{\mu}m_0&=0 \nonumber
\\
\frac{\partial h_{1}(a_0,a_1)}{\partial a_1}+\frac{\partial h_{2}(a_0,a_1)}{\partial a_1}+\tilde{\mu}m_1&=0 \nonumber
\\
m_0a_0+m_1a_1-d&=0.
\end{align}
In the case of the exponential lifetime we have,
\begin{align*}
	\frac{\partial h_{1}(a_0,a_1)}{\partial a_0}&=-\frac{\beta}{\lambda_1^\beta (\beta+1)}- \exp\left( -(\beta+1)\frac{\tau_1}{\lambda_1} \right) \left[ \frac{\tau_1}{\lambda_1} \left( \frac{1}{\lambda_1^\beta} - \frac{1}{\lambda_2^\beta} \right) - \frac{\beta}{\beta+1} \left( \frac{1}{\lambda_1^\beta} - \frac{1}{\lambda_2^\beta} \right) \right] \\
&\quad + \exp\left( -(\beta+1)\frac{\tau_2+h}{\lambda_2} \right) \left[ (\beta+1)\frac{\tau_2+h}{\lambda_2} \left( 1 - \frac{1}{\lambda_2^\beta(\beta+1)} \right) +\frac{\beta}{\lambda_2^\beta(\beta+1)} \right],
	\\
	\frac{\partial h_{2}(a_0,a_1)}{\partial a_0}&=- \frac{\beta+1}{N} \Bigg[
	\sum_{i=1}^{n_1} \frac{1}{\lambda_1^\beta} \left(\frac{t_{i:N}}{\lambda_1}-1\right) \exp\left(-\beta \frac{t_{i:N}}{ \lambda_1}\right)
	 \\
	&\quad + \sum_{i=n_1+1}^{n_1+n_2} \frac{1}{\lambda_2^\beta} \left(\frac{t_{i:N}+h}{\lambda_2}-1\right) \exp\left(-\beta \frac{t_{i:N}+h}{ \lambda_2}\right)
	 \\
	&\left. \quad + (N-n_1-n_2) \frac{\tau_2+h}{\lambda_2} \exp\left(-\beta \frac{\tau_2+h}{\lambda_2}\right)\right]
,
	\\
\frac{\partial h_1(a_0, a_1)}{\partial a_1} &=  -\frac{\beta x_1}{\lambda_1^{\beta}(\beta+1)} 
	\\
	& - \left[ \frac{\tau_1 x_1}{\lambda_1} \left( \frac{1}{\lambda_1^{\beta}} - \frac{1}{\lambda_2^{\beta}} \right) - \frac{\beta}{\beta+1} \left( \frac{x_1}{\lambda_1^{\beta}}-\frac{x_2}{\lambda_2^{\beta}}  \right) \right] \exp\left( -\frac{\tau_1}{\lambda_1}(\beta+1) \right)
	 \\
	& + \left[ (\beta+1) \left( \frac{(\tau_2-\tau_1)x_2+\frac{\lambda_2}{\lambda_1}\tau_1 x_1}{\lambda_2}  \right) \left( 1 - \frac{1}{\lambda_2^{\beta}(\beta+1)} \right) + \frac{\beta x_2}{\lambda_2^{\beta}(\beta+1)} \right] \exp\left( -\frac{\tau_2+h}{\lambda_2}(\beta+1) \right),
	\\
	\frac{\partial h_{2}(a_0,a_1)}{\partial a_1}&=-\frac{\beta+1}{ N} \left\{ \frac{x_1}{\lambda_1^{\beta}}\sum_{i=1}^{n_1}\left(\frac{t_{i:N}}{\lambda_1}-1\right)\exp\left(-\frac{t_{i:N} \beta}{\lambda_1}\right)  \right.
	\\
	&+\frac{1}{\lambda_2^{\beta}}\sum_{i=n_1+1}^{n_1+n_2}\left(\frac{(t_{i:N}-\tau_1)x_2 + \frac{\lambda_2}{\lambda_1}\tau_1 x_1}{\lambda_2}-x_2\right)\exp\left(-\frac{t_{i:N} +h}{\lambda_2}\beta\right) 
	\\
	&\left. +(N-n_1-n_2)\frac{(\tau_2-\tau_1)x_2+\frac{\lambda_2}{\lambda_1}\tau_1x_1}{\lambda_2} \exp\left(-\frac{\tau_2+h}{\lambda_2}\beta\right) \right\}.
\end{align*} 
For more details see \citet{jaenada2025}.
So the RMDPDE $\tilde{\bm{a}}$ is the vector that must satisfy those equations.
\begin{theorem}\label{th:theo_1}
Let $\bm{a}^*=(a^*_0,a^*_1)^{\intercal}$ be the true value of the parameters and assume that $g(a_0^*,a_1^*)=0$
 with $g(\cdot)$ defined in  \eqref{eq:RMDPDE_eq}.The asymptotic distribution of the RMDPDE for the SSALTs model under exponential lifetime, $\tilde{\bm{a}}^{\beta}$, obtained under constraint $g(\bm{a})=0$ is given by:
 \begin{align*}
 \sqrt{N}\left(\tilde{\bm{a}}^\beta - \bm{a}^*\right) \xrightarrow[N \to \infty]{\mathcal{L}} \mathcal{N}\left(\bm{0},\bm{\Sigma}_{\beta}(\bm{a}^*)\right),
 \end{align*}
 where
 \begin{align*}
 \bm{\Sigma}_{\beta}(\bm{a}^*)&=\bm{P}_\beta(\bm{a}^*)\bm{K}_\beta(\bm{a}^*)\bm{P}_\beta(\bm{a}^*)
 \\
\bm{P}_\beta(\bm{a}^*)&=\bm{J}_\beta(\bm{a}^*)^{-1}-\bm{Q}_\beta(\bm{a}^*)\bm{m}^{\intercal}\bm{J}_\beta(\bm{a}^*)^{-1}
  \\
   \bm{Q}_{\beta}(\bm{a}^*)&=\bm{J}_\beta(\bm{a}^*)^{-1} \bm{m}\left(\bm{m}^{\intercal}\bm{J}_\beta(\bm{a}^*)^{-1}\bm{m}\right)^{-1},
 \end{align*}
 where $\bm{J}_{\beta}(\bm{a}^*)$ a matrix whose components are
 \begin{align*}
 J_{ii}^{\beta}(a_{i}^*)&=\int_0^{\tau_1}\left(\frac{\partial \log\left(f_1(t|\bm{\lambda})\right)}{\partial a_i}\right)^{2}f_1(t|\bm{\lambda})^{\beta+1}dt
 \\
 &+\int_{\tau_1}^{\tau_2}\left(\frac{\partial \log\left(f_2(t+h|\bm{\lambda})\right)}{\partial a_i}\right)^{2}f_2(t+h|\bm{\lambda})^{\beta+1}dt
 \\
 &+	\left(\frac{\partial \log\left(1-F_2(\tau_2+h|\bm{\lambda})\right)}{\partial a_i}\right)^{2}\left(1-F_2(\tau_2+h|\bm{\lambda})\right)^{\beta+1}
 \\
& \text{for i=0,1},
 \end{align*}
  \begin{align*}
 J_{ij}^{\beta}(a_{i}^*,a_{j}^*)&=\int_0^{\tau_1}\left(\frac{\partial \log\left(f_1(t|\lambda_1)\right)}{\partial a_i}\right)\left(\frac{\partial \log\left(f_1(t|\bm{\lambda})\right)}{\partial a_j}\right)f_1(t|\bm{\lambda})^{\beta+1}dt
 \\
 &+\int_{\tau_1}^{\tau_2}\left(\frac{\partial \log\left(f_2(t+h|\bm{\lambda})\right)}{\partial a_i}\right)\left(\frac{\partial \log\left(f_2(t+h|\bm{\lambda})\right)}{\partial a_j}\right)f_2(t+h|\bm{\lambda})^{\beta+1}dt
 \\
 &+	\left(\frac{\partial \log\left(1-F_2(\tau_2+h|\bm{\lambda})\right)}{\partial a_i}\right)\left(\frac{\partial \log\left(1-F_2(\tau_2+h|\bm{\lambda})\right)}{\partial a_j}\right)\left(1-F_2(\tau_2+h|\bm{\lambda})\right)^{\beta+1}
 \\
& \text{for } i,j=0,1; \, i \neq j,
 \end{align*}
 and
 \begin{align*}
 \bm{K}_{\beta}\left(\bm{a}^*\right)&= \bm{J}_{2\beta}\left(\bm{a}^*\right)-\bm{\zeta}_{\beta}\left(\bm{a}^*\right)^{\intercal}\bm{\zeta}_{\beta}\left(\bm{a}^*\right),
 \end{align*}
 with $\bm{\zeta}_\beta(\bm{a}^*)$ a vector whose entries are:
 \begin{align*}
 \zeta_i^{\beta}(\bm{a}^*)&=\int_0^{\tau_1}\left(\frac{\partial \log\left(f_1(t|\lambda_1)\right)}{\partial a_i}\right)f_1(t|\lambda_1)^{\beta+1}dt
 \\
 &+\int_{\tau_1}^{\tau_2}\left(\frac{\partial \log\left(f_2(t+h|\bm{\lambda})\right)}{\partial a_i}\right)f_2(t+h|\bm{\lambda})^{\beta+1}dt
 \\
 &+	\left(\frac{\partial \log\left(1-F_2(\tau_2+h|\bm{\lambda})\right)}{\partial a_i}\right)\left(1-F_2(\tau_2+h|\bm{\lambda})\right)^{\beta+1}
 \\
& \text{for i=0,1},
 \end{align*}
 \end{theorem}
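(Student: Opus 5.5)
The plan is the standard M-estimation argument for restricted minimum divergence estimators, in the style of Basu et al.'s restricted MDPDE results. I take as given the unrestricted theory from \citet{jaenada2025}. That theory gives two ingredients. First, the estimating function $\bm{U}_N(\bm{a}) = \partial H_N^\beta/\partial \bm{a}$ has the per-observation form
\[
\bm{\Psi}_\beta(t,\bm{a}) = \bm{\zeta}_\beta(\bm{a}) - \bm{u}(t,\bm{a}) f_T(t|\bm{a})^\beta,
\]
where $\bm{u}$ is the mixed score: $\partial_{\bm a}\log f_1$ on $(0,\tau_1)$, $\partial_{\bm a}\log f_2$ on $(\tau_1,\tau_2)$, and $\partial_{\bm a}\log(1-F_2)$ at the atom $\tau_2$. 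This holds up to the factor $(\beta+1)$, which cancels. Second, under the model the central limit theorem gives
\[
\sqrt N\,\bm{U}_N(\bm a^*)\to\mathcal N(\bm 0,\bm K_\beta(\bm a^*)).
\]
Indeed, the variance of $\bm u f^\beta$ under the mixed measure is $\bm J_{2\beta}-\bm\zeta_\beta\bm\zeta_\beta^\intercal$, because the integrals against the density $f^{\beta+1}\cdot f^\beta$ produce $f^{2\beta+1}$. Finally, $\partial \bm U_N/\partial\bm a^\intercal \to \bm J_\beta(\bm a^*)$ in probability, by the law of large numbers together with the identity $E[\partial_{\bm a}\bm\Psi]=\bm J_\beta$, which holds at the true value.

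The first step is consistency of $\tilde{\bm a}^\beta$. Since $A_0$ is a closed line containing $\bm a^*$, and $H_N^\beta$ converges uniformly on compacts to $d_\beta(F_T(\cdot|\bm a^*),F_T(\cdot|\bm a))$ (up to constants), which is uniquely minimized at $\bm a^*$, the usual argmin-consistency argument applies on $A_0$. Identifiability uses the assumption that $\lambda_1\neq\lambda_2$ are determined by the mixed law and that $x_1\ne x_2$.

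Next I Taylor expand the Lagrange first-order conditions stated above around $\bm a^*$:
\[
\bm U_N(\bm a^*) + \bm J_\beta(\bm a^*)(\tilde{\bm a}^\beta-\bm a^*) + \tilde\mu\,\bm m + o_p(N^{-1/2}) = \bm 0,\qquad \bm m^\intercal(\tilde{\bm a}^\beta-\bm a^*)=0.
\]
The second equation uses linearity of $g$ together with $g(\bm a^*)=0$. Multiplying the first equation by $\sqrt N$ gives a linear system in $(\sqrt N(\tilde{\bm a}-\bm a^*),\sqrt N\tilde\mu)$ with bordered matrix
\[
\begin{pmatrix}\bm J_\beta & \bm m\\ \bm m^\intercal & 0\end{pmatrix}.
\]
This matrix is invertible because $\bm J_\beta$ is positive definite and $\bm m\neq\bm 0$. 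Solving it explicitly gives
\[
\sqrt N\tilde\mu = -(\bm m^\intercal\bm J_\beta^{-1}\bm m)^{-1}\bm m^\intercal\bm J_\beta^{-1}\sqrt N\bm U_N + o_p(1).
\]
Substituting back,
\[
\sqrt N(\tilde{\bm a}^\beta-\bm a^*) = -\big(\bm J_\beta^{-1}-\bm Q_\beta\bm m^\intercal\bm J_\beta^{-1}\big)\sqrt N\bm U_N(\bm a^*)+o_p(1) = -\bm P_\beta\sqrt N\bm U_N+o_p(1).
\]
Slutsky's theorem then yields the normal limit with covariance $\bm P_\beta\bm K_\beta\bm P_\beta^\intercal$. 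The symmetry $\bm P_\beta=\bm P_\beta^\intercal$ follows from a direct check: $\bm Q_\beta\bm m^\intercal\bm J_\beta^{-1} = \bm J_\beta^{-1}\bm m\bm m^\intercal\bm J_\beta^{-1}/(\bm m^\intercal\bm J_\beta^{-1}\bm m)$, which is symmetric. This gives the stated $\bm\Sigma_\beta$. Note that $\bm\zeta_\beta^\intercal\bm\zeta_\beta$ in the statement should be read as the outer product.

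I expect the main obstacle to be justifying the expansion for the mixed discrete–continuous law. This requires uniform convergence of the Hessian of $H_N^\beta$ in a neighbourhood of $\bm a^*$, and control of the remainder. The dependence on $\bm a$ enters through $h$ and the atom at $\tau_2$. My first attempt would be to note that all terms are smooth exponentials in $(\lambda_1,\lambda_2)=(e^{a_0+a_1x_1},e^{a_0+a_1x_2})$ on a compact neighbourhood, with bounded support $[0,\tau_2]$. The third derivatives are then bounded by a constant, so a mean-value bound gives the $o_p(1)$ remainder once $\tilde{\bm a}^\beta\to\bm a^*$. The same boundedness makes all moments needed for the central limit theorem finite.
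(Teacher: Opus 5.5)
Your proposal is correct and takes essentially the same route as the paper. Both arguments Taylor-expand the Lagrange first-order conditions around $\bm a^*$, use $\bm m^\intercal(\tilde{\bm a}^\beta-\bm a^*)=0$ to form the bordered system built from $\bm J_\beta(\bm a^*)$ and $\bm m$, invert it to get $\sqrt N(\tilde{\bm a}^\beta-\bm a^*)=-\bm P_\beta\sqrt N\,\bm U_N(\bm a^*)+o_p(1)$, and conclude with the CLT $\sqrt N\,\bm U_N(\bm a^*)\to\mathcal N(\bm 0,\bm K_\beta)$ and Slutsky. Your version also supplies steps the paper leaves implicit: the explicit elimination of the multiplier, which gives the correct $\bm P_\beta$ despite typos in the paper's block inverse, plus the consistency argument and the remainder control.
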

 where $\lambda_i$ depends on $a_0$ and $a_1$ through the relationship defined in \eqref{eq:lambda_eq}.
\begin{proof}
See Appendix.
\end{proof}
The  RMDPDE, $\tilde{\bm{a}}^\beta$, derived in the previous section is strictly necessary to construct the Rao-type test statistics. To proceed with the formulation of this test, let us first formalize our testing framework. We consider a linear null hypothesis on the parameter $\bm{a}$ of the form
\begin{align*}
H_{0}:\bm{m}^{\intercal}\bm{a}&=d,
\end{align*}
where $\bm{m}=(m_0,m_1)^{\intercal} \in \mathbb{R}^2$ and $d \in \mathbb{R}$. In particular, the linear null hypothesis with $\bm{m}^{\intercal}=(0,1)$ and $d=0$ would test if the stress level affects the lifetime of the devices.

\section{Z-type statistics}\label{sec:sec4}

In this section, we introduce the Z-type test statistics based on the unrestricted MDPDE. To ensure consistency throughout our testing procedures, we will maintain the same structure for the parametric space, focusing exclusively on linear constraints as defined for the null hypothesis above.

\begin{definition} \label{def:def_z}
The Z-type statistic based on the MDPDE $\hat{\bm{a}}^{\beta}$, for testing the null hypothesis is given by:
\begin{align}
Z_{N}(\hat{\bm{a}}^{\beta})=\sqrt{N}\left(\bm{m}^{\intercal}\bm{J}_{\beta}\left(\hat{\bm{a}}^{\beta}\right)^{-1}\bm{K}_{\beta}\left(\hat{\bm{a}}^{\beta}\right)\bm{J}_{\beta}\left(\hat{\bm{a}}^{\beta}\right)^{-1}\bm{m}\right)^{-\frac{1}{2}}\left(\bm{m}^{\intercal}\hat{\bm{a}}^{\beta}-d\right).
\label{eq:z_type}
\end{align}
 \end{definition}
 The asymptotic distribution of the statistic is given by the following theorem.
 \begin{theorem}\label{th:theo_2}
 The asymptotic distribution of the Z-statistic under the null hypothesis is a standard Normal distribution.
 \end{theorem}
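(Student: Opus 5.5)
The plan is to derive the result from the asymptotic normality of the unrestricted MDPDE $\hat{\bm{a}}^{\beta}$, and then apply the delta method and Slutsky's theorem.

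\textbf{Step 1 (asymptotic normality of the MDPDE).} We use the fact, established in \citet{jaenada2025} for this mixed discrete--continuous model and implicit in the proof of Theorem~\ref{th:theo_1}, that
\begin{align*}
\sqrt{N}\left(\hat{\bm{a}}^{\beta}-\bm{a}^*\right)\xrightarrow[N\to\infty]{\mathcal{L}}\mathcal{N}\left(\bm{0},\bm{J}_{\beta}(\bm{a}^*)^{-1}\bm{K}_{\beta}(\bm{a}^*)\bm{J}_{\beta}(\bm{a}^*)^{-1}\right).
\end{align*}
Here $\bm{J}_\beta$ and $\bm{K}_\beta$ are the matrices defined in Theorem~\ref{th:theo_1}. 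If this needs to be re-derived, the argument is the standard M-estimation one. Write the estimating equations $\nabla_{\bm{a}} H_N^\beta(\bm{a})=\bm{0}$ as an empirical mean of i.i.d.\ score-type terms $\bm{\Psi}_\beta(T_i,\bm{a})$, where $\bm{\Psi}_\beta$ has three branches: $t<\tau_1$, $\tau_1\le t<\tau_2$, and the point mass at $\tau_2$. Then:
\begin{itemize}
\item Taylor-expand around $\bm{a}^*$.
\item Show that the Hessian converges in probability to $(\beta+1)\bm{J}_\beta(\bm{a}^*)$ by the law of large numbers.
\item Show that $\sqrt{N}$ times the score converges to $\mathcal{N}\bigl(\bm{0},(\beta+1)^2\bm{K}_\beta(\bm{a}^*)\bigr)$ by the CLT.
\end{itemize}
The covariance of the score equals $\bm{J}_{2\beta}-\bm{\zeta}_\beta\bm{\zeta}_\beta^{\intercal}$. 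This follows because the second moment of each branch raises the power of $f$ to $2\beta+1$, and the mean produces $\bm{\zeta}_\beta$.

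\textbf{Step 2 (linear functional under $H_0$).} Since $\bm{m}^{\intercal}\bm{a}^*=d$ under $H_0$, we have $\sqrt{N}(\bm{m}^{\intercal}\hat{\bm{a}}^{\beta}-d)=\bm{m}^{\intercal}\sqrt{N}(\hat{\bm{a}}^{\beta}-\bm{a}^*)$. By the continuous mapping theorem this converges to $\mathcal{N}(0,\sigma^2_\beta(\bm{a}^*))$, where
\begin{align*}
\sigma^2_\beta(\bm{a}^*)=\bm{m}^{\intercal}\bm{J}_\beta(\bm{a}^*)^{-1}\bm{K}_\beta(\bm{a}^*)\bm{J}_\beta(\bm{a}^*)^{-1}\bm{m}.
\end{align*}

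\textbf{Step 3 (plug-in variance).} The entries of $\bm{J}_\beta$, $\bm{K}_\beta$ and $\bm{\zeta}_\beta$ are explicit integrals of smooth functions of $\lambda_i=\exp(a_0+a_1x_i)$ over the compact intervals $[0,\tau_1]$ and $[\tau_1,\tau_2]$, plus a smooth boundary term. Differentiation under the integral sign is therefore justified, and these matrices are continuous in $\bm{a}$. Consistency of $\hat{\bm{a}}^\beta$ then gives $\sigma^2_\beta(\hat{\bm{a}}^\beta)\xrightarrow{P}\sigma^2_\beta(\bm{a}^*)$. By Slutsky's theorem,
\begin{align*}
Z_N(\hat{\bm{a}}^\beta)=\sigma_\beta(\hat{\bm{a}}^\beta)^{-1}\sqrt{N}\left(\bm{m}^{\intercal}\hat{\bm{a}}^\beta-d\right)\xrightarrow{\mathcal{L}}\mathcal{N}(0,1).
\end{align*}

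\textbf{Main obstacle.} The delicate point is guaranteeing that $\sigma^2_\beta(\bm{a}^*)>0$, so that the normalization is well defined. This requires $\bm{J}_\beta(\bm{a}^*)$ to be invertible and $\bm{K}_\beta(\bm{a}^*)$ to be positive definite.

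\begin{itemize}
\item For $\bm{J}_\beta$: it is a Gram matrix of the score components $\partial_{a_0}\log f$ and $\partial_{a_1}\log f$ under a positive weight. It is nonsingular as long as these two functions are not proportional on $(0,\tau_2]$. This holds when $x_1\neq x_2$ and failures can occur in both intervals, which is the standing assumption.
\item For $\bm{K}_\beta$: it is the covariance matrix of the score vector. Its positive definiteness follows from the same non-degeneracy, since the score components are not almost surely affinely related under the mixed law.
\end{itemize}

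Given these, $\bm{m}\neq\bm{0}$ yields $\sigma^2_\beta(\bm{a}^*)>0$, and the proof is complete.
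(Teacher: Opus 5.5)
Your proposal is correct and follows the same route as the paper's proof. Both use the asymptotic normality of the unrestricted MDPDE from \citet{jaenada2025}, reduce under $H_0$ to the linear functional $\bm{m}^{\intercal}\sqrt{N}(\hat{\bm{a}}^{\beta}-\bm{a}^*)$, and finish with Slutsky's theorem and the consistent plug-in variance. Your extra checks (continuity of $\bm{J}_\beta$ and $\bm{K}_\beta$ in $\bm{a}$, and positivity of $\bm{m}^{\intercal}\bm{J}_\beta^{-1}\bm{K}_\beta\bm{J}_\beta^{-1}\bm{m}$ from $x_1\neq x_2$) fill in details the paper leaves implicit.
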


\begin{proof}
See Appendix.
\end{proof}
Based on Theorem \ref{th:theo_2} , for any $\beta \geq0$ and $\bm{m} \in \mathbb{R}^2$, the critical region of significance level $\alpha$ for a hypothesis test with a linear null hypothesis, is given by:
\begin{align*}
R_{\alpha}=\left\{\left(t_{1:N};...;t_{n_1:N};...;t_{n_1+n_2:N};N-n_1-n_2\right)\left| \left|Z_{N}\left(\hat{\bm{a}}^{\beta}\right)\right| > Z_{\alpha/2} \right. \right\},
\end{align*}
where $t_{1:N};...;t_{n_1:N};...;t_{n_1+n_2:N}$ are the times of failure in the continuous interval and $N-n_1-n_2$ are the number of devices that survive at the end of the test.
\begin{remark}
We can generalize the null hypothesis to:
\begin{align*}
H_0:\bm{M}^{\intercal}\bm{a}=\bm{d},
\end{align*}
where $\bm{M}$ is a $r \times 2$ matrix $(r \leq 2)$ and d being an r-dimensional vector. Then the corresponding test statistic can be defined
\begin{align*}
Z_{N}\left(\hat{\bm{a}}^{\beta}\right)=N\left(\bm{M}^{\intercal}\hat{\bm{a}}^{\beta}-d\right)^{\intercal}\left(\bm{M}^{\intercal}\bm{J}_{\beta}\left(\hat{\bm{a}}^{\beta}\right)^{-1}\bm{K}_{\beta}\left(\hat{\bm{a}}^{\beta}\right)\bm{J}_{\beta}\left(\hat{\bm{a}}^{\beta}\right)^{-1}\bm{M}\right)^{-1}\left(\bm{M}^{\intercal}\hat{\bm{a}}^{\beta}-d\right).
\end{align*}
The matrix $\bm{M}^{\intercal}\bm{J}_{\beta}\left(\hat{\bm{a}}^{\beta}\right)^{-1}\bm{K}_{\beta}\left(\hat{\bm{a}}^{\beta}\right)\bm{J}_{\beta}\left(\hat{\bm{a}}^{\beta}\right)^{-1}\bm{M}$ is symmetric so the statistic is well defined.
With a similar proof of the theorem above one can show that under the generalized null hypothesis:
\begin{align*}
\sqrt{N}\left(\bm{M}^{\intercal}\bm{J}_{\beta}\left(\hat{\bm{a}}^{\beta}\right)^{-1}\bm{K}_{\beta}\left(\hat{\bm{a}}^{\beta}\right)\bm{J}_{\beta}\left(\hat{\bm{a}}^{\beta}\right)^{-1}\bm{M}\right)^{-\frac{1}{2}}\left(\bm{M}^{\intercal}\hat{\bm{a}}^{\beta}-\bm{d}\right) \xrightarrow[N \to \infty]{\mathcal{L}} \mathcal{N}\left(\bm{0},\bm{I}_{r\times r}\right).
\end{align*}
So
\begin{align*}
Z_{N}\left(\hat{\bm{a}}^{\beta}\right)\xrightarrow[N \to \infty]{\mathcal{L}} \chi_{r}^{2}.
\end{align*}
\end{remark}

\begin{theorem}\label{th:theo_3}
Let $\bm{a}^* \in A$ be the true parameter of $\bm{a}$ with $\bm{m}^{\intercal}\bm{a}^* \neq d$.  Then, the approximate power function of the test statistic is given by:
\begin{align*}
B_{N}\left(\bm{a}^*\right) &\approx \Phi\left(-z_{\alpha/2}-\sqrt{N}\left(\bm{m}^{\intercal}\bm{J}_{\beta}\left(\hat{\bm{a}}^{\beta}\right)^{-1}\bm{K}_{\beta}\left(\hat{\bm{a}}^{\beta}\right)\bm{J}_{\beta}\left(\hat{\bm{a}}^{\beta}\right)^{-1}\bm{m}\right)^{-\frac{1}{2}}\left(\bm{m}^{\intercal}\bm{a}^*-d\right)\right)
\\
&+1-\Phi\left(z_{\alpha/2}-\sqrt{N}\left(\bm{m}^{\intercal}\bm{J}_{\beta}\left(\hat{\bm{a}}^{\beta}\right)^{-1}\bm{K}_{\beta}\left(\hat{\bm{a}}^{\beta}\right)\bm{J}_{\beta}\left(\hat{\bm{a}}^{\beta}\right)^{-1}\bm{m}\right)^{-\frac{1}{2}}\left(\bm{m}^{\intercal}\bm{a}^{*}-d\right)\right).
\end{align*}
\end{theorem}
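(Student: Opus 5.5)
The approximate power function follows from the asymptotic normality of the unrestricted MDPDE at a fixed alternative $\bm{a}^*$, combined with a direct computation of $P(|Z_N(\hat{\bm{a}}^\beta)|>z_{\alpha/2})$. As in the proof of Theorem \ref{th:theo_2}, I will use the result of \citet{jaenada2025} for the unrestricted estimator: under the model with true parameter $\bm{a}^*$,
\begin{align*}
\sqrt{N}\left(\hat{\bm{a}}^\beta-\bm{a}^*\right)\xrightarrow[N\to\infty]{\mathcal{L}}\mathcal{N}\left(\bm{0},\bm{J}_\beta(\bm{a}^*)^{-1}\bm{K}_\beta(\bm{a}^*)\bm{J}_\beta(\bm{a}^*)^{-1}\right).
\end{align*}
This result holds whether or not $\bm{m}^{\intercal}\bm{a}^*=d$, since the hypothesis plays no role in the unrestricted estimator. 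By the delta method (here just a linear map),
\begin{align*}
\sqrt{N}\,\bm{m}^{\intercal}(\hat{\bm{a}}^\beta-\bm{a}^*)\xrightarrow[N\to\infty]{\mathcal{L}}\mathcal{N}\left(0,\sigma_\beta^2(\bm{a}^*)\right),
\end{align*}
where $\sigma_\beta^2(\bm{a}^*)=\bm{m}^{\intercal}\bm{J}_\beta(\bm{a}^*)^{-1}\bm{K}_\beta(\bm{a}^*)\bm{J}_\beta(\bm{a}^*)^{-1}\bm{m}$.

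Next, I write $\hat\sigma_\beta^2=\sigma_\beta^2(\hat{\bm{a}}^\beta)$ and decompose the statistic as
\begin{align*}
Z_N(\hat{\bm{a}}^\beta)=\frac{\sqrt{N}\,\bm{m}^{\intercal}(\hat{\bm{a}}^\beta-\bm{a}^*)}{\hat\sigma_\beta}+\frac{\sqrt{N}(\bm{m}^{\intercal}\bm{a}^*-d)}{\hat\sigma_\beta}.
\end{align*}
The entries of $\bm{J}_\beta$ and $\bm{K}_\beta$ are integrals of smooth functions of $\lambda_1,\lambda_2$ over bounded intervals, plus the smooth point-mass term, so they depend continuously on $\bm{a}$. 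Consistency of $\hat{\bm{a}}^\beta$ and the continuous mapping theorem then give $\hat\sigma_\beta\to\sigma_\beta(\bm{a}^*)>0$ in probability, using positive definiteness of $\bm{J}_\beta$ and $\bm{K}_\beta$. By Slutsky, the first term is asymptotically $\mathcal{N}(0,1)$.

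The power is $B_N(\bm{a}^*)=P(Z_N>z_{\alpha/2})+P(Z_N<-z_{\alpha/2})$. Set $\delta_N=\sqrt{N}(\bm{m}^{\intercal}\bm{a}^*-d)/\hat\sigma_\beta$. Treating $\delta_N$ as approximately deterministic, with its value fixed at the plug-in $\hat\sigma_\beta$ as in the statement, and approximating the first term by a standard normal $W$, gives
\begin{align*}
P(W>z_{\alpha/2}-\delta_N)+P(W<-z_{\alpha/2}-\delta_N)=1-\Phi(z_{\alpha/2}-\delta_N)+\Phi(-z_{\alpha/2}-\delta_N),
\end{align*}
which is exactly the displayed formula.

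The main obstacle is justifying the heuristic "$\approx$". The shift $\delta_N$ diverges like $\sqrt N$, so the limiting power is trivially $1$ and no convergence statement captures the formula. To handle this, I would present the result explicitly as a first-order normal approximation: the error between $B_N$ and the formula is controlled by the difference between the distribution function of the standardized term and $\Phi$, which tends to $0$ uniformly by P\'olya's theorem. The randomness of $\hat\sigma_\beta$ inside $\delta_N$ is absorbed by replacing it with $\sigma_\beta(\bm{a}^*)$ up to an $o_p(1)$ relative error. If a nondegenerate statement is needed, I would add a remark with contiguous alternatives $\bm{a}_N=\bm{a}^*+N^{-1/2}\bm{l}$, for which the same argument yields an exact limit.
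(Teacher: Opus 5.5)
Your proposal is correct and follows the paper's own route. You split $Z_N(\hat{\bm{a}}^{\beta})$ into the standardized centered term $\sqrt{N}\,\bm{m}^{\intercal}(\hat{\bm{a}}^{\beta}-\bm{a}^*)/\hat\sigma_\beta$ plus the shift $\sqrt{N}(\bm{m}^{\intercal}\bm{a}^*-d)/\hat\sigma_\beta$, invoke the asymptotic normality of the unrestricted MDPDE from \citet{jaenada2025} together with consistency and Slutsky's theorem, and read off the two normal tail probabilities, exactly as the paper does.

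Your extra discussion of what ``$\approx$'' can mean (the $\sqrt{N}$-divergent shift, P\'olya's theorem) goes beyond the paper, which simply appeals to Slutsky. If you write it out, put the factor $\hat\sigma_\beta/\sigma_\beta(\bm{a}^*)$ on the threshold $z_{\alpha/2}$ rather than inside the shift. An $o_p(1)$ relative error in a quantity of order $\sqrt{N}$ is not an $o_p(1)$ absolute error, so it cannot simply be absorbed into the shift.
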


\begin{proof}
See Appendix.
\end{proof}

\subsection{Testing Contiguous Alternative Hypothesis}
Let $\bm{a}_L \in A \setminus A_0$ be an alternative and take $\bm{a}^*$ the closest element to the boundary of $A_0$ with respect to the Euclidean distance. Consider the alternative as:
\begin{align}\label{eq:Null_test}
H_{1,L}: \bm{a}=\bm{a}_L,
\end{align}
with $\bm{a}_L=\bm{a}^*+\frac{1}{\sqrt{N}}\ell$ for a fixed vector $\bm{\ell} \in \mathbb{R}^2$ and defining $\bm{\ell}^*=\bm{m}^{\intercal} \bm{\ell}$ we have:
\begin{align*}
\bm{m}^{\intercal}\bm{a}_L - d =\bm{m}^{\intercal}\left(\bm{a}_L-\bm{a}^*\right)=\bm{m}^{\intercal}\frac{\bm{\ell}}{\sqrt{N}}=\frac{\bm{\ell}^*}{\sqrt{N}}.
\end{align*}
So the contiguous hypothesis is equivalent to:
\begin{align*}
g(\bm{a}_L)=\frac{\bm{\ell}^*}{\sqrt{N}}.
\end{align*}
\begin{theorem}\label{th:theo_4}
The asymptotic distribution of the Z-type statistic under the contiguous hypothesis is a normal distribution with mean $\left(\bm{m}^{\intercal}\bm{J}_{\beta}\left(\hat{\bm{a}}^{\beta}\right)^{-1}\bm{K}_{\beta}\left(\hat{\bm{a}}^{\beta}\right)\bm{J}_{\beta}\left(\hat{\bm{a}}^{\beta}\right)^{-1}\bm{m}\right)^{-\frac{1}{2}} \bm{m}^{\intercal}\bm{\ell}$ and unit variance.
\end{theorem}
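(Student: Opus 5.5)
We would prove Theorem \ref{th:theo_4} by the standard decomposition of $\bm{m}^{\intercal}\hat{\bm{a}}^{\beta}-d$ around the boundary point $\bm{a}^*$. The basic input is the asymptotic normality of the unrestricted MDPDE, which is the ingredient already used for Theorem \ref{th:theo_2}: under the true parameter $\bm{a}_L$,
\begin{align*}
\sqrt{N}\left(\hat{\bm{a}}^{\beta}-\bm{a}_L\right) \xrightarrow[N \to \infty]{\mathcal{L}} \mathcal{N}\left(\bm{0},\bm{J}_{\beta}(\bm{a}^*)^{-1}\bm{K}_{\beta}(\bm{a}^*)\bm{J}_{\beta}(\bm{a}^*)^{-1}\right).
\end{align*}
Since $\bm{a}_L\to\bm{a}^*$, this convergence must hold along a sequence of parameters rather than at a fixed point.

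The first step is to write
\begin{align*}
\sqrt{N}\left(\bm{m}^{\intercal}\hat{\bm{a}}^{\beta}-d\right)=\sqrt{N}\,\bm{m}^{\intercal}\left(\hat{\bm{a}}^{\beta}-\bm{a}_L\right)+\sqrt{N}\left(\bm{m}^{\intercal}\bm{a}_L-d\right).
\end{align*}
The second term equals $\bm{\ell}^*=\bm{m}^{\intercal}\bm{\ell}$ exactly, by the computation preceding the theorem. The first term converges in law to $\mathcal{N}\left(0,\bm{m}^{\intercal}\bm{J}_{\beta}^{-1}\bm{K}_{\beta}\bm{J}_{\beta}^{-1}\bm{m}\right)$, with the matrices evaluated at $\bm{a}^*$. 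Hence $\sqrt{N}(\bm{m}^{\intercal}\hat{\bm{a}}^{\beta}-d)$ is asymptotically normal with mean $\bm{m}^{\intercal}\bm{\ell}$ and that variance.

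The second step treats the standardizing factor. The entries of $\bm{J}_{\beta}$, $\bm{K}_{\beta}$ and $\bm{\zeta}_{\beta}$ are explicit integrals of smooth functions of $(\lambda_1,\lambda_2)=(e^{a_0+a_1x_1},e^{a_0+a_1x_2})$ over bounded intervals, plus a smooth point-mass term. They are therefore continuous in $\bm{a}$. Because $\hat{\bm{a}}^{\beta}-\bm{a}_L=O_P(N^{-1/2})$ and $\bm{a}_L-\bm{a}^*=O(N^{-1/2})$, we get $\hat{\bm{a}}^{\beta}\to\bm{a}^*$ in probability. The continuous mapping theorem then gives
\begin{align*}
\left(\bm{m}^{\intercal}\bm{J}_{\beta}(\hat{\bm{a}}^{\beta})^{-1}\bm{K}_{\beta}(\hat{\bm{a}}^{\beta})\bm{J}_{\beta}(\hat{\bm{a}}^{\beta})^{-1}\bm{m}\right)^{-1/2}\to \left(\bm{m}^{\intercal}\bm{J}_{\beta}(\bm{a}^*)^{-1}\bm{K}_{\beta}(\bm{a}^*)\bm{J}_{\beta}(\bm{a}^*)^{-1}\bm{m}\right)^{-1/2}
\end{align*}
in probability, provided $\bm{J}_{\beta}(\bm{a}^*)$ is invertible and the quadratic form is positive. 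Slutsky's theorem then yields that $Z_N(\hat{\bm{a}}^{\beta})$ is asymptotically normal with unit variance and mean equal to this factor times $\bm{m}^{\intercal}\bm{\ell}$. This is the statement of the theorem, with the estimated matrices read as their probability limits at $\bm{a}^*$.

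The main obstacle is justifying the first display under the moving parameter $\bm{a}_L$. We would redo the estimating-equation argument from the Appendix proof for the MDPDE. A Taylor expansion of the score of $H_N^{\beta}$ around $\bm{a}_L$ gives $\sqrt{N}(\hat{\bm{a}}^{\beta}-\bm{a}_L)=\bm{J}_{\beta}(\bm{a}_L)^{-1}\sqrt{N}\,\bm{U}_N(\bm{a}_L)+o_P(1)$, where $\bm{U}_N$ is an average of i.i.d. bounded terms with mean zero under $\bm{a}_L$. This decomposition splits into the $n_1$ failures in $(0,\tau_1]$, the $n_2$ failures in $(\tau_1,\tau_2]$, and the censored mass at $\tau_2$. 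Boundedness holds because the weights $f^{\beta}$ kill the polynomial factors in $t$, and $t$ ranges over a bounded interval anyway. The Lindeberg condition for triangular arrays is therefore immediate. The covariance of the summands converges to $\bm{K}_{\beta}(\bm{a}^*)$ and the Hessian to $\bm{J}_{\beta}(\bm{a}^*)$ by continuity. An alternative route is Le Cam's third lemma, since the mixed model is smooth and hence locally asymptotically normal; but the direct triangular-array central limit theorem is simpler here.
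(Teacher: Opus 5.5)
Your proposal is correct and follows the paper's route. You use the same decomposition $\sqrt{N}(\bm{m}^{\intercal}\hat{\bm{a}}^{\beta}-d)=\bm{m}^{\intercal}\bm{\ell}+\bm{m}^{\intercal}\sqrt{N}(\hat{\bm{a}}^{\beta}-\bm{a}_L)$, the asymptotic normality of the unrestricted MDPDE, and Slutsky's theorem for the estimated standardizing factor. The only difference is care: the paper applies the fixed-parameter asymptotic normality result at $\bm{a}_L$ as if $\bm{a}_L$ did not depend on $N$, whereas you justify it along the sequence $\bm{a}_L\to\bm{a}^*$ (triangular-array CLT or Le Cam's third lemma) and correctly evaluate the limiting matrices, and hence the mean, at $\bm{a}^*$.
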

\begin{proof}
    See Appendix.
\end{proof}
The power function of the test statistic can be obtained as
\begin{align*}
B_{N}\left(\bm{a}_L\right)&=P\left(\left|Z_{N}\left(\hat{\bm{a}}^{\beta} \right) \right| \ge z_{\alpha/2} \left| \bm{a}=\bm{a}_L \right. \right)
\\
&=P\left(Z_{N}\left(\hat{\bm{a}}^{\beta} \right)  \le -z_{\alpha/2} \left| \bm{a}=\bm{a}_L \right. \right) +P\left(Z_{N}\left(\hat{\bm{a}}^{\beta} \right)  \geq z_{\alpha/2} \left| \bm{a}=\bm{a}_L \right. \right)
\\
&=\Phi\left(-z_{\alpha/2}-\sqrt{\frac{N}{\bm{m}^{\intercal}\bm{J}_{\beta}\left(\hat{\bm{a}}_L\right)^{-1}\bm{K}_{\beta}\left(\hat{\bm{a}}_L\right)\bm{J}_{\beta}\left(\hat{\bm{a}}_L\right)^{-1}\bm{m}}} \bm{m}^{\intercal}\bm{\ell}\right)
\\
&+1-\Phi\left(z_{\alpha/2}-\sqrt{\frac{N}{\bm{m}^{\intercal}\bm{J}_{\beta}\left(\hat{\bm{a}}_L\right)^{-1}\bm{K}_{\beta}\left(\hat{\bm{a}}_L\right)\bm{J}_{\beta}\left(\hat{\bm{a}}_L\right)^{-1}\bm{m}}} \bm{m}^{\intercal}\bm{\ell}\right).
\end{align*}
It is clear that $\lim_{N \to \infty}B_{N}\left(\bm{a}_L\right)=1$. Therefore, the test is consistent in Fraser's sense.

\section{Rao-type Test Statistic}\label{sec:sec6}
Rao's test uses the restricted MLE and may outperform the Wald test. Basu developed parametric tests for composite hypotheses based on RMDPDEs.
Let us consider $\tilde{\bm{a}}^{\beta}$ be the restricted MDPDE, where the parameter space is constrained by the null hypothesis:
\begin{align*}
\bm{A}_0= \left\{\bm{a} \mid \bm{m}^{\intercal}\bm{a}=d \right\} ,
\end{align*}
and $ \hat{\bm{a}}^{\beta}$ denotes the MDPDE for $\bm{a}$ over the whole parameter space. Let us consider the score function associated with the DPD loss for this model:
\begin{align}\label{eq:score}
U_{\beta,N}\left(\bm{a}\right)=\frac{\partial h_{1}\left(a_0,a_1\right)}{\partial \bm{a}} +\frac{\partial h_{2}\left(a_0,a_1\right)}{\partial \bm{a}} .
\end{align}
Remember that, based on the minimization problem, the MDPDE satisfies:
\begin{align*}
U_{\beta,N}\left(\hat{\bm{a}}^{\beta}\right)=\bm{0}.
\end{align*}
\begin{definition}\label{def:rao_test_def}
The Rao-type statistic, based on the RMDPDE $\tilde{\bm{a}}^{\beta}$, for testing the null hypothesis given in \eqref{eq:Null_test} is given by
\begin{align} \label{eq:rao_test_def}   
R_{\beta,N}\left(\tilde{\bm{a}}^{\beta}\right)=N\, U_{\beta,N}\left(\tilde{\bm{a}}^{\beta}\right)^{\intercal}Q_{\beta}\left(\tilde{\bm{a}}^{\beta}\right)\left[ Q_{\beta}\left(\tilde{\bm{a}}^{\beta}\right)^{\intercal}K_{\beta}\left(\tilde{\bm{a}}^{\beta}\right)Q_{\beta}\left(\tilde{\bm{a}}^{\beta}\right)\right]^{-1}Q_{\beta}\left(\tilde{\bm{a}}^{\beta}\right)^{\intercal}U_{\beta}\left(\tilde{\bm{a}}^{\beta}\right).
\end{align}
\end{definition}
\begin{theorem}\label{th:theo_5}
The asymptotic distribution of the score $U_{\beta,N}\left(\tilde{\bm{a}}^{\beta}\right)$, under correct specification of the model, for the step-stress ALT with continuous monitoring and exponential lifetimes is given by:
\begin{align*}
\sqrt{N}\,U_{\beta,N}\left(\tilde{\bm{a}}^{\beta}\right) \xrightarrow[N \to \infty]{\mathcal{L}} \mathcal{N}\left(\bm{0},\bm{K}\left(\bm{a}^*\right)\right).
\end{align*}
\end{theorem}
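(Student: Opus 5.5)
The plan is to prove the statement as worded: under correct specification, $\sqrt{N}\,U_{\beta,N}(\tilde{\bm a}^\beta)$ converges in law to $\mathcal N(\bm 0,\bm K_\beta(\bm a^*))$. I will first establish the limit at $\bm a^*$ and then transfer it to $\tilde{\bm a}^\beta$ by a Taylor expansion combined with Theorem~\ref{th:theo_1}.

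\emph{Step 1: CLT for the score at the true value.} Write $U_{\beta,N}(\bm a)=\frac{1}{N}\sum_{k=1}^N \bm\Psi_\beta(T_k,\bm a)$, where $T_k$ is the observed, possibly censored, lifetime of unit $k$. The summand is
\[
\bm\Psi_\beta(t,\bm a)=-(\beta+1)\Big[\bm u(t,\bm a)\,p(t,\bm a)^{\beta}-\bm\zeta_\beta(\bm a)\Big].
\]
Here $p(t,\bm a)$ equals $f_1(t|\bm\lambda)$ on $(0,\tau_1)$, $f_2(t+h|\bm\lambda)$ on $(\tau_1,\tau_2)$, and the mass $1-F_2(\tau_2+h|\bm\lambda)$ at $t=\tau_2$. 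The vector $\bm u=\partial\log p/\partial\bm a$ is the corresponding score.

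This decomposition follows from the explicit derivatives of $h_1$ and $h_2$ given above. Differentiating the integral terms of $H^\beta_N$ gives $(\beta+1)\bm\zeta_\beta$. The data-dependent terms are empirical averages of $\bm u\,p^\beta$.

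Under the model, $E_{\bm a^*}[\bm u\,p^\beta]=\bm\zeta_\beta(\bm a^*)$, so each summand is centred. The $T_k$ are i.i.d. from the mixed law $F_T$, so the multivariate CLT applies. The covariance is $E[\bm u\bm u^{\intercal}p^{2\beta}]-\bm\zeta_\beta\bm\zeta_\beta^{\intercal}=\bm J_{2\beta}-\bm\zeta_\beta\bm\zeta_\beta^{\intercal}=\bm K_\beta(\bm a^*)$, up to the constant factor $(\beta+1)^2$. I will absorb that factor into the normalisation of $U_{\beta,N}$, as is implicit in the paper's convention. Hence $\sqrt N\,U_{\beta,N}(\bm a^*)\to\mathcal N(\bm 0,\bm K_\beta(\bm a^*))$.

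\emph{Step 2: Transfer to $\tilde{\bm a}^\beta$.} Expand
\[
U_{\beta,N}(\tilde{\bm a}^\beta)=U_{\beta,N}(\bm a^*)+\frac{\partial U_{\beta,N}(\bm a^\dagger)}{\partial\bm a^{\intercal}}(\tilde{\bm a}^\beta-\bm a^*),
\]
with $\bm a^\dagger$ an intermediate point. Theorem~\ref{th:theo_1} gives consistency of $\tilde{\bm a}^\beta$ and $\sqrt N(\tilde{\bm a}^\beta-\bm a^*)=O_p(1)$. A uniform LLN then gives $\partial U_{\beta,N}/\partial\bm a^{\intercal}\to\bm J_\beta(\bm a^*)$. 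The derivatives are smooth in $\bm a$ and dominated on a compact neighbourhood, because all terms are exponentials of bounded $t\le\tau_2$.

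Therefore
\[
\sqrt N\,U_{\beta,N}(\tilde{\bm a}^\beta)=\sqrt N\,U_{\beta,N}(\bm a^*)+\bm J_\beta\sqrt N(\tilde{\bm a}^\beta-\bm a^*)+o_p(1).
\]

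\emph{Step 3: The obstacle.} To conclude the stated $\bm K_\beta$ limit, the second term must be asymptotically negligible. Equivalently, one needs the Bahadur representation $\sqrt N(\tilde{\bm a}^\beta-\bm a^*)=-\bm P_\beta\sqrt N U_{\beta,N}(\bm a^*)+o_p(1)$ from the proof of Theorem~\ref{th:theo_1}. Substituting it gives $(\bm I-\bm J_\beta\bm P_\beta)\sqrt N U_{\beta,N}(\bm a^*)$.

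The main obstacle is to show that this equals $\sqrt N U_{\beta,N}(\bm a^*)$ in law. I will attempt it by using the Lagrange conditions $U_{\beta,N}(\tilde{\bm a}^\beta)=-\tilde\mu\,\bm m$ and $\bm m^{\intercal}(\tilde{\bm a}^\beta-\bm a^*)=0$. These confine the correction term to the direction $\bm m$.

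I expect this step to require care. Identity $\bm I-\bm J_\beta\bm P_\beta=\bm m(\bm m^{\intercal}\bm J_\beta^{-1}\bm m)^{-1}\bm m^{\intercal}\bm J_\beta^{-1}$ suggests that the statement genuinely holds only after projection by $\bm Q_\beta$. That is exactly how the score enters the Rao statistic in Definition~\ref{def:rao_test_def}.
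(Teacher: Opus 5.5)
Your Step~3 cannot be completed, and the Lagrange condition you plan to use is exactly what rules it out: as worded, the statement at $\tilde{\bm a}^\beta$ is false. Since $\tilde{\bm a}^\beta$ minimises a smooth loss over the line $\bm m^{\intercal}\bm a=d$ in $\mathbb{R}^2$, the identity $U_{\beta,N}(\tilde{\bm a}^\beta)=-\tilde\mu_N\bm m$ holds exactly for every sample. Take $\bm v\neq\bm 0$ with $\bm v^{\intercal}\bm m=0$. Then $\bm v^{\intercal}\sqrt N\,U_{\beta,N}(\tilde{\bm a}^\beta)\equiv 0$, so the claimed limit would force $\bm v^{\intercal}\bm K_\beta(\bm a^*)\bm v=0$. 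But $\bm K_\beta(\bm a^*)$ is the covariance of $\bm u\,p^\beta$, and for $x_1\neq x_2$ the variable $\bm v^{\intercal}\bm u\,p^\beta$ is a.s.\ constant only when $\bm v=\bm 0$. Your own expansion gives the correct limit. Since $\bm I-\bm J_\beta\bm P_\beta=\bm m\,\bm Q_\beta^{\intercal}$,
\[
\sqrt N\,U_{\beta,N}(\tilde{\bm a}^\beta)=\bm m\,\bm Q_\beta^{\intercal}\sqrt N\,U_{\beta,N}(\bm a^*)+o_p(1)\xrightarrow[N\to\infty]{\mathcal L}\mathcal N\left(\bm 0,\bm m\,\bm Q_\beta^{\intercal}\bm K_\beta\bm Q_\beta\,\bm m^{\intercal}\right),
\]
which is a rank-one law. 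Because $\bm Q_\beta^{\intercal}\bm m=1$, you also get $\sqrt N\,\bm Q_\beta^{\intercal}U_{\beta,N}(\tilde{\bm a}^\beta)\to\mathcal N(0,\bm Q_\beta^{\intercal}\bm K_\beta\bm Q_\beta)$, and that is all Theorem~\ref{th:theo_6} needs. Your closing suspicion is therefore right, but the rank-one factor is a disproof, not an obstacle to work around.

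The paper's own proof never faces this. It only notes that $N^{1/2}\,\partial h_N/\partial\bm a$ at $\bm a^*$ has mean $\bm 0$ and covariance $\bm K_\beta(\bm a^*)$, then appeals to the CLT. That is exactly your Step~1, with $U_{\beta,N}(\tilde{\bm a}^\beta)$ silently identified with $U_{\beta,N}(\bm a^*)$. Indeed, the proof of Theorem~\ref{th:theo_6} uses the result in the form $\sqrt N\,U_{\beta,N}(\bm a^*)\to\mathcal N(\bm 0,\bm K_\beta(\bm a^*))$. Your Step~1 is a correct and more careful version of that argument: it gives the explicit i.i.d.\ representation, the centring $E_{\bm a^*}[\bm u\,p^\beta]=\bm\zeta_\beta$, and the covariance $\bm J_{2\beta}-\bm\zeta_\beta\bm\zeta_\beta^{\intercal}$. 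So state the theorem at $\bm a^*$, and give the rank-one law above as the correct statement at $\tilde{\bm a}^\beta$, rather than leaving Step~3 as an open attempt. Finally, make the normalisation explicit instead of ``absorbing'' it. With $U_{\beta,N}$ exactly as in \eqref{eq:score}, the covariance is $(\beta+1)^2\bm K_\beta$. You get $\bm K_\beta$ only with the appendix's $h_N=(h_1+h_2)/(1+\beta)$. The choice rescales the Rao statistic by the same factor, so it is not cosmetic.
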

\begin{proof}
See Appendix.
\end{proof}
\begin{theorem}\label{th:theo_6}
The asymptotic distribution of the Rao-type test statistic under the linear null hypothesis is a $\chi^2_1$.
\end{theorem}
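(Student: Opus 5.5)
Since Theorem \ref{th:theo_5} gives the limit law of the score at the RMDPDE, the natural route is a quadratic-form argument. Two facts are needed: the limit distribution of the projected score $\sqrt{N}\,\bm{Q}_\beta(\bm{a}^*)^{\intercal}U_{\beta,N}(\tilde{\bm{a}}^\beta)$, and consistency of the plug-in matrices. We then conclude with Slutsky's theorem and the continuous mapping theorem.

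\textbf{Step 1 (consistency of the plug-in matrices).} Under $H_0$ the true $\bm{a}^*$ lies in $A_0$. Theorem \ref{th:theo_1} gives $\tilde{\bm{a}}^\beta \to \bm{a}^*$ in probability, since $\sqrt{N}$-consistency implies consistency. The maps $\bm{a}\mapsto \bm{J}_\beta(\bm{a}),\bm{K}_\beta(\bm{a}),\bm{Q}_\beta(\bm{a})$ are continuous in a neighbourhood of $\bm{a}^*$: they are finite integrals of smooth exponential expressions in $\lambda_i=\exp(a_0+a_1x_i)$ over bounded intervals, and $\bm{J}_\beta$ is invertible. Hence $\bm{Q}_\beta(\tilde{\bm{a}}^\beta)\to \bm{Q}_\beta(\bm{a}^*)$ and $\bm{K}_\beta(\tilde{\bm{a}}^\beta)\to \bm{K}_\beta(\bm{a}^*)$ in probability.

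\textbf{Step 2 (limit of the projected score).} I would use the structure of the restricted estimating equations. From the Lagrangian first-order conditions, $U_{\beta,N}(\tilde{\bm{a}}^\beta)=-\tilde\mu\,\bm{m}$, so the score at the RMDPDE is proportional to $\bm{m}$. Consequently
\[
\bm{Q}_\beta^{\intercal}U_{\beta,N}(\tilde{\bm{a}}^\beta)=-\tilde\mu\,\bigl(\bm{m}^{\intercal}\bm{J}_\beta^{-1}\bm{m}\bigr)^{-1}\bm{m}^{\intercal}\bm{J}_\beta^{-1}\bm{m}=-\tilde\mu ,
\]
which is a scalar. Therefore the statistic reduces to $N\tilde\mu^2/(\bm{Q}_\beta^{\intercal}\bm{K}_\beta\bm{Q}_\beta)$, evaluated at $\tilde{\bm{a}}^\beta$. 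It remains to find the limit of $\sqrt{N}\tilde\mu$.

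For this I would Taylor-expand the unrestricted score around $\bm{a}^*$:
\[
U_{\beta,N}(\tilde{\bm{a}}^\beta)=U_{\beta,N}(\bm{a}^*)+\bm{J}_\beta(\bm{a}^*)(\tilde{\bm{a}}^\beta-\bm{a}^*)+o_p(N^{-1/2}),
\]
using that the Hessian of $H_N^\beta$ converges to $\bm{J}_\beta$, which (up to the paper's constant $\beta+1$) is standard for MDPDEs. Combine this with $\bm{m}^{\intercal}(\tilde{\bm{a}}^\beta-\bm{a}^*)=0$ and with $\sqrt{N}U_{\beta,N}(\bm{a}^*)\to\mathcal{N}(\bm{0},\bm{K}_\beta(\bm{a}^*))$, which is the CLT for the unrestricted score underlying Theorem \ref{th:theo_1}. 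Solving the linear system gives
\[
\sqrt{N}\tilde\mu=-\bm{Q}_\beta(\bm{a}^*)^{\intercal}\sqrt{N}U_{\beta,N}(\bm{a}^*)+o_p(1).
\]
Hence
\[
\sqrt{N}\,\bm{Q}_\beta^{\intercal}U_{\beta,N}(\tilde{\bm{a}}^\beta)\xrightarrow{\mathcal{L}}\mathcal{N}\bigl(0,\ \bm{Q}_\beta(\bm{a}^*)^{\intercal}\bm{K}_\beta(\bm{a}^*)\bm{Q}_\beta(\bm{a}^*)\bigr).
\]
Equivalently, this follows by applying the linear map $\bm{Q}_\beta^{\intercal}$ to the limit in Theorem \ref{th:theo_5}.

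\textbf{Step 3 (conclusion).} Set $W_N=\sqrt{N}\,\bm{Q}_\beta(\tilde{\bm{a}}^\beta)^{\intercal}U_{\beta,N}(\tilde{\bm{a}}^\beta)$. By Steps 1--2 and Slutsky's theorem, $W_N$ is asymptotically normal with mean $0$ and scalar variance $v=\bm{Q}_\beta^{\intercal}\bm{K}_\beta\bm{Q}_\beta$, evaluated at $\bm{a}^*$. The statistic equals $W_N^2/\hat v$, where $\hat v=\bm{Q}_\beta(\tilde{\bm{a}}^\beta)^{\intercal}\bm{K}_\beta(\tilde{\bm{a}}^\beta)\bm{Q}_\beta(\tilde{\bm{a}}^\beta)$ and $\hat v\to v>0$ in probability. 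Standardising and squaring then yields $\chi^2_1$, the one degree of freedom coming from the single linear constraint.

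\textbf{Main obstacle.} The hardest step is the Taylor/Lagrange step: showing the remainder is $o_p(N^{-1/2})$ and that the empirical Hessian converges to $\bm{J}_\beta$ for this mixed discrete–continuous loss. I would reuse the argument behind Theorem \ref{th:theo_1}; the loss is a sum of smooth bounded functions of the observations, so a uniform law of large numbers holds on a compact neighbourhood of $\bm{a}^*$. A secondary point is $v>0$, which requires $\bm{K}_\beta(\bm{a}^*)$ to be positive definite. This holds because $\bm{K}_\beta$ is the covariance matrix of the per-observation score, and that score is non-degenerate whenever both stress intervals have positive failure probability.
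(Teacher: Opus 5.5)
Your proposal is correct and follows essentially the paper's own route. The Lagrange condition $U_{\beta,N}(\tilde{\bm a}^\beta)\propto\bm m$ collapses the statistic to $N\tilde\mu^2/\hat v$. The bordered-matrix Taylor argument, together with the CLT for $\sqrt N U_{\beta,N}(\bm a^*)$, gives $\sqrt N\tilde\mu\to\mathcal N(0,\bm Q_\beta^{\intercal}\bm K_\beta\bm Q_\beta)$, and Slutsky then yields $\chi^2_1$. You are in fact more careful than the paper about signs and about needing an $o_p(N^{-1/2})$ remainder.

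One caution: drop the ``equivalently, apply $\bm Q_\beta^{\intercal}$ to Theorem~\ref{th:theo_5}'' aside. By your own Step 2, $\sqrt N U_{\beta,N}(\tilde{\bm a}^\beta)=-\sqrt N\tilde\mu\,\bm m$, whose limit is the degenerate $\mathcal N(\bm 0, v\,\bm m\bm m^{\intercal})$ rather than $\mathcal N(\bm 0,\bm K_\beta)$. The final variance only happens to agree because $\bm Q_\beta^{\intercal}\bm m=1$. The ingredient genuinely needed, and the one the paper's proof actually uses, is the score CLT at $\bm a^*$.
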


\begin{proof}
See Appendix.
\end{proof}

  \section{Simulation Study}\label{sec:sec7}

  This section investigates the finite-sample performance of the RMDPDE, the (Z)-type tests, and the Rao-type tests for the step-stress ALT model under exponential lifetimes. To this end, a simple step-stress accelerated life test (SSALT) under Type-I censoring is simulated according to the following design.

A total of $n=360$ units, whose lifetimes follow an exponential distribution, are subjected to the step-stress test. Initially, all units are tested at the stress level $x_1=1$. At the stress-change time $\tau_1=9$, the stress level is increased to $x_2=2$. The experiment is terminated at time $\tau_2=24.55$, and the nominal operating stress is set to $x_0=0.5$. The true parameter vector is $\boldsymbol{a} = (a_0, a_1)^\intercal = (3.65, -1.15)^\intercal$. 
Under the log-linear relationship defined in \eqref{eq:lambda_eq}, this true parameter vector determines the scale parameters (which coincide with the mean lifetimes for the exponential distribution) for each stress condition. Specifically, the expected mean lifetimes are $\lambda_0 \approx 21.65$ under the nominal stress $x_0$, $\lambda_1 \approx 12.18$ under the initial stress $x_1$, and $\lambda_2 \approx 3.86$ under the accelerated stress $x_2$. Notice how the mean lifetime strictly decreases as the applied stress increases, a physical requirement reflected by the negative sign of the parameter $a_1$.
This setup dictates the expected behavior of the failure process throughout the experiment. By the stress-change time $\tau_1=9$, the expected probability of failure is $F^*(\tau_1) \approx 0.522$, indicating that roughly $52.2\%$ of the units are expected to fail during the first stage of the test. By the end of the experiment at $\tau_2=24.55$, the cumulative probability of failure reaches $F^*(\tau_2) \approx 0.991$. Consequently, only about $0.9\%$ of the sample is expected to be right-censored, ensuring a high proportion of exact failure times to robustly estimate the parameters.

To evaluate the robustness of the estimators, it is necessary to specify a mechanism to generate outliers and define the contamination scheme. In our framework, outliers are conceptualized as an unexpectedly high proportion of observations falling within a low-density region of the target theoretical distribution. Under the assumed exponential step-stress model, the probability density decreases rapidly over time; thus, observing a cluster of failures in the far right tail is highly improbable under clean conditions. By artificially generating a small mass of observations in these low-probability regions, we simulate a realistic contamination scenario.
In contrast to probability-matching approaches, contamination is introduced directly at the parameter level. Specifically, for a given contamination intensity $\epsilon$, the parameter is perturbed multiplicatively as
\[
\tilde{a} =
\begin{cases}
(1+\epsilon)a_0, & \text{if $a_0$ is contaminated}\\
(1-\epsilon)a_1 & \text{if $a_1$ is contaminated}. 
\end{cases}
\]
This transformation is applied either to $a_0$ or to $a_1$, while keeping the other parameter fixed. 

We increase $a_0$ via $(1+\epsilon)$ and decrease the magnitude of $a_1$ via $(1-\epsilon)$ (recalling that $a_1 < 0$, which makes this shift an increase in its algebraic value). Under both scenarios, this directional change directly results in larger scale parameters $\lambda_2$, thereby increasing the expected mean lifetime of the units. Consequently, the probability mass is shifted towards much larger failure times, resulting in a lower hazard rate during the active testing intervals. This shift effectively forces a portion of the simulated sample to behave as outliers by clustering failures in regions that would otherwise exhibit extremely low theoretical density under the clean model.

Outliers are generated within a fixed time interval $[T_1, T_2] \subset (\tau_1, \tau_2)$, which in this study is set to $(T_1, T_2) = (21.5, 24.5)$. Under the nominal model, the probability of a unit failing within this specific interval is approximately $0.01$. This extremely low probability mathematically guarantees that the theoretical probability density in this region is very low. Consequently, introducing contaminated observations here forces a cluster of failures in a low-density zone, providing a highly reliable scheme to evaluate the robustness of the estimators.

The proportion of outliers is not imposed directly, but instead has been derived from the contaminated model in each case. More precisely, given the contaminated parameters $(\tilde{a}_0, \tilde{a}_1)$, the outlier proportion $\epsilon_{\text{eff}}$ is computed as the probability mass assigned by the contaminated distribution into the contaminated interval $[T_1, T_2]$. That is,
\[
\text{proportion}_\text{outlier}= P(T_1 < T < T_2|\tilde{\bm{a}}),
\]
which depends implicitly on the contamination level and the stress configuration.

The proportion $\epsilon_{\text{eff}}$ of contaminated observations is generated from a shifted and truncated exponential distribution over $[T_1, T_2]$ with the same contaminated parameters as used to determine the contamination proportion, while the remaining observations follow the original (non-contaminated) step-stress exponential model. In this way, the contamination is twofold: on the one hand, a higher proportion of failures is introduced within a low-density interval; on the other hand, these outlying observations follow a different underlying distribution. This construction ensures that the presence of outliers is driven by a structurally different failure mechanism rather than an artificial adjustment of probabilities. Furthermore, retaining the shifted exponential nature for the outliers within this interval realistically captures the physical wear-out process of defective units, which are more prone to fail immediately upon entering the stress phase (due to higher initial density) rather than uniformly across the interval.
Thus, this approach reflects realistic scenarios in accelerated life testing, where a subset of units may be affected by manufacturing defects or experimental anomalies, resulting in lifetimes governed by a different parameter configuration.

Since this work ultimately deals with hypothesis testing, we naturally contemplate two distinct scenarios regarding the underlying data-generating process. This design ensures that when the statistical tests and the restricted estimators are formally defined, we can evaluate their performance under both a true null hypothesis (correct specification) and a false null hypothesis (misspecification). Specifically, we establish the hypothesis test for the stress parameter by defining the null hypothesis as 
\begin{equation}\label{eq:hypothesis_constraint}
H_0 : \bm{m}^T \bm{a} = -1.15,
\end{equation}
where $\bm{m}^T = (0, 1)$ and $\bm{a} = (a_0, a_1)^T$. This restricts the general parameter space $\Theta$ to the null subspace $\Theta_0 = \{ \boldsymbol{\theta} \in \Theta \mid a_1 = -1.15 \}$. Under this framework, we analyze two distinct scenarios depending on the true value of the stress parameter: a correctly specified model ($a_1 = -1.15$) and a misspecified model ($a_1 = -1.4$). Because the underlying lifetime distribution is directly governed by $a_1$, the theoretical probability density functions differ between these two setups. Consequently, for a fixed contamination parameter $\varepsilon$, the resulting proportion of simulated outliers varies from one model to the other. Tables \ref{tab:outlier_a0} and \ref{tab:outlier_a1} detail these empirical outlier proportions across the different contamination levels. As expected, the volume of outliers scales with $\varepsilon$, ultimately representing between 3\% and 5\% of the sample at the highest contamination levels under both scenarios.
\begin{table}[H]
\centering
\caption{Proportion of outliers (\%) for different contamination levels ($\varepsilon$) when introducing $a_0$ outliers.}
\label{tab:outlier_a0}
\begin{tabular}{lcc}
\toprule
& \textbf{Correctly Specified Model (\%)} & \textbf{Misspecified Model (\%)} \\
\cmidrule(lr){2-2} \cmidrule(lr){3-3}
$\varepsilon$ & ($a_0$ outlier) & ($a_0$ outlier) \\
\midrule
0.00 & 0.00 & 0.00 \\
0.06 & 1.66 & 0.37 \\
0.12 & 2.62 & 0.86 \\
0.18 & 3.63 & 1.63 \\
0.24 & 4.51 & 2.61 \\
0.30 & 5.15 & 3.66 \\
\bottomrule
\end{tabular}
\end{table}
\begin{table}[H]
\centering
\caption{Proportion of outliers (\%) for different contamination levels ($\varepsilon$) when introducing $a_1$ outliers.}
\label{tab:outlier_a1}
\begin{tabular}{lcc}
\toprule
& \textbf{Correctly Specified Model (\%)} & \textbf{Misspecified Model (\%)} \\
\cmidrule(lr){2-2} \cmidrule(lr){3-3}
$\varepsilon$ & ($a_1$ outlier) & ($a_1$ outlier) \\
\midrule
0.000 & 0.00 & 0.00 \\
0.091 & 1.29 & 0.39 \\
0.182 & 1.75 & 0.92 \\
0.273 & 2.24 & 1.71 \\
0.364 & 2.75 & 2.63 \\
0.455 & 3.24 & 3.51 \\
\bottomrule
\end{tabular}
\end{table}

\subsection{Performance of the RMDPDE}

To evaluate the performance and robustness of the RMDPDE under linear constraints, we formally define the restricted estimation framework using the null hypothesis defined in \eqref{eq:hypothesis_constraint}. We can use the null space defined by that null hypothesis and analyze the behavior of the restricted estimator under different data-generating conditions depending on whether the imposed constraint is true or false:
For the correct specified model, the simulated data are generated using the true parameter value $a_1 = -1.15$. In this case, the true parameter vector belongs to the subspace defined by the null hypothesis, meaning that the restriction forced on the RMDPDE is correct. and for the misspecified model the simulated data are generated using the true parameter value $a_1 = -1.4$. Under this setup, the true parameter does not satisfy the constraint ($a_1 \notin H_0$), representing a scenario where we evaluate the RMDPDE under an incorrect model specification.
This dual analysis is key to understanding the trade-off inherent to the RMDPDE: it allows us to quantify the efficiency gains of using the restricted estimator when our prior constraints are accurate, as well as the potential bias introduced when the enforced hypothesis does not hold.

To evaluate the effects of outliers, we obtained the root of the mean square error (RMSE) of the RMDPDEs under different proportions of contamination $\epsilon$,  and tuning parameters values $\beta=0 (MLE), 0.2, ..., 1$.
We obtain the RMSE  by generating $R=1000$ samples, and for each one, obtaining the RMDPDE (note that in this scenario, the estimation is only for $\hat{a}_0$, because $\hat{a}_1=-1.15, -1.4$ for each simulation) So the RMDPDE for each tuning parameter and proportion of contamination is:
\begin{align*}
RMSE^{\beta,\epsilon}=\sqrt{\sum_{i=1}^{1000}\frac{\left(\hat{a}^{\beta,\epsilon}_{i}-3.65\right)^{2}}{1000}}.
\end{align*}
Figure \ref{fig:rmse_comparison_true} illustrates how the RMSE is affected by different proportions of contamination under the true assumption. The left plot displays the RMSE when the affected parameter is $a_0$, while the right plot shows the results for $a_1$. It can be observed that in the presence of outliers, the RMSE values are quite similar; in fact, with a sample size of 1,000 and incorporating the information of the true constraint ($a_1 = -1.15$), the RMSE remains very low.

In principle, in the absence of outliers, the MLE is the best estimator. However, as the proportion of outliers increases, the RMSE of all estimators rises. As expected, the presence of outliers negatively impacts the estimates. Notably, the RMSE does not vary significantly as the parameter $\beta$ increases. 

In summary, while the MLE performs better without outliers, it is much more sensitive to their presence than the RMDPDE estimators. This highlights the trade-off between efficiency and robustness: the RMDPDE is more robust, evidenced by the fact that with only a 2\% outlier proportion, the MLE already exhibits one of the highest RMSE values.

\begin{figure}[H]
    \centering
    \includegraphics[width=0.9\textwidth]{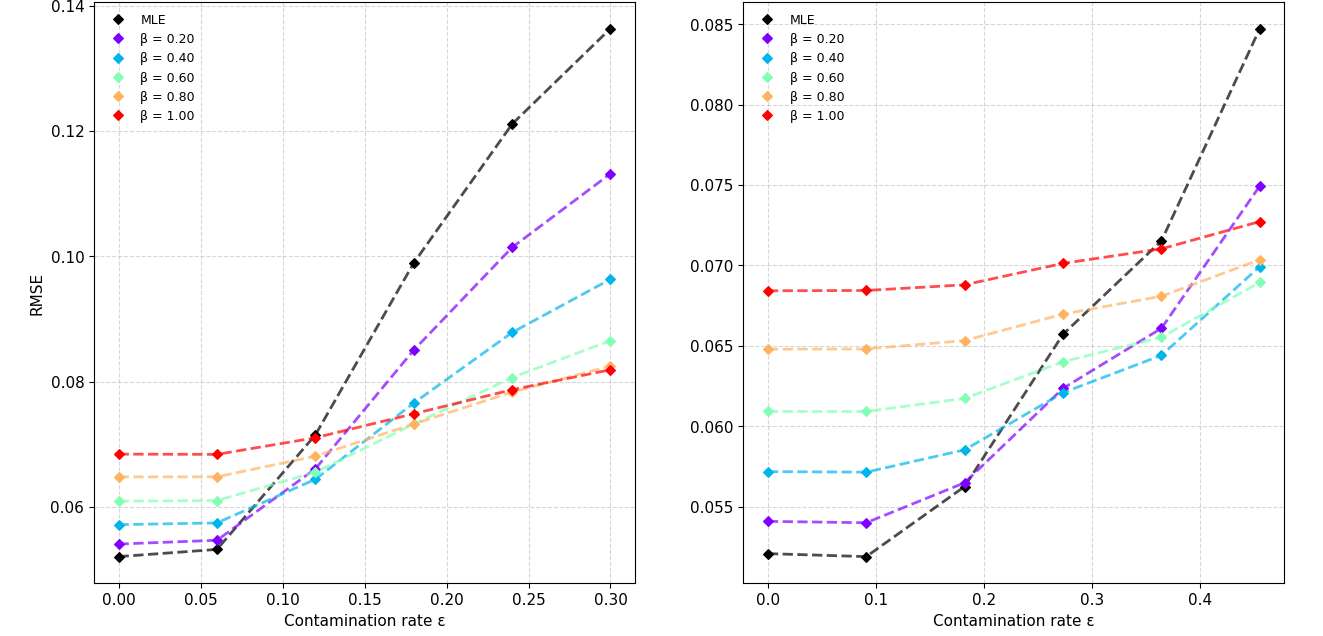}
    \caption{RMSE of the RMDPDE under the true model given contamination in $a_0$ (left) and in $a_1$ (right)}
    \label{fig:rmse_comparison_true}
\end{figure}

It is also of interest to measure the gain or loss in efficiency when using the RMDPDE instead of the RMLE. So, for each value of $\beta$, the relative efficiency can be obtained as follows:

\begin{equation*}
\rho(\hat{\bm{a}}^{\beta}) = \frac{RMSE^{\beta,\epsilon}}{RMSE^{0,\epsilon}}-1.
\end{equation*}
This metric compares the RMSE of the RMDPDE respect to the case of $\beta=0$, this is, the MLE. If $\rho(\hat{\theta}^{\beta})<0$, the RMDPDE has a lower RMSE than the MLE, meaning it is more efficient. It is clear that $\rho(\hat{\theta}^{MLE})=0$.
Figure \ref{fig:efficiency_comparison_true} illustrates the evolution of efficiency relative to the MLE. That is, although the reference value $\rho(\hat{\theta}^{MLE})$ remains constant at 0, it has already been established that the RMSE increases. It can be observed that as the proportion of outliers grows, the RMDPDE improves the MLE's RMSE by up to 20\%.
\begin{figure}[H]
    \centering
    \includegraphics[width=0.9\textwidth]{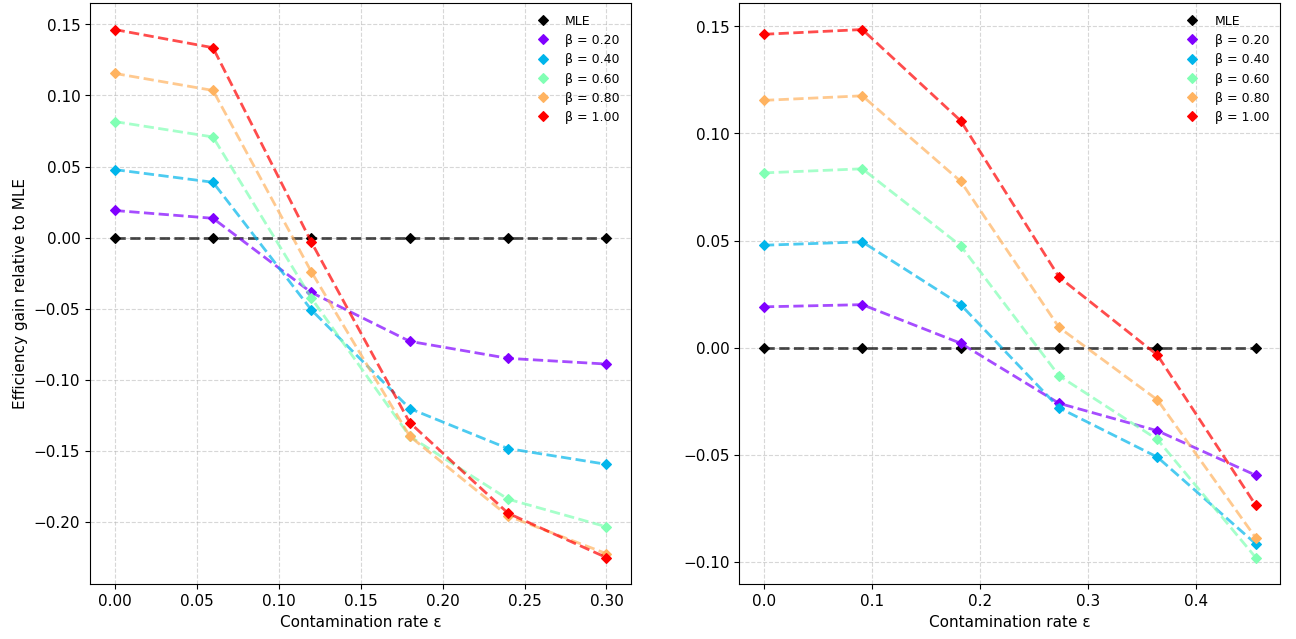}
    \caption{Efficiency gain of the RMDPDE under the true model given contamination in $a_0$ (left) and in $a_1$ (right)}
    \label{fig:efficiency_comparison_true}
\end{figure}
The RMSE was also calculated under a misspecified model ($a_1 = -1.4$ and $d = -1.15$), as shown in Figure~\ref{fig:rmse_comparison_false}. In this case, it can be observed that the RMSE for the MLE decreases in the presence of outliers. Although this might appear beneficial at first glance, it is important to consider that the model is misspecified. Consequently, if one intends to perform a hypothesis test, for instance, being closer to the true parameter values increases the probability of failing to reject the null hypothesis, thereby negatively affecting the decision-making process.
\begin{figure}[H]
    \centering
    \includegraphics[width=0.9\textwidth]{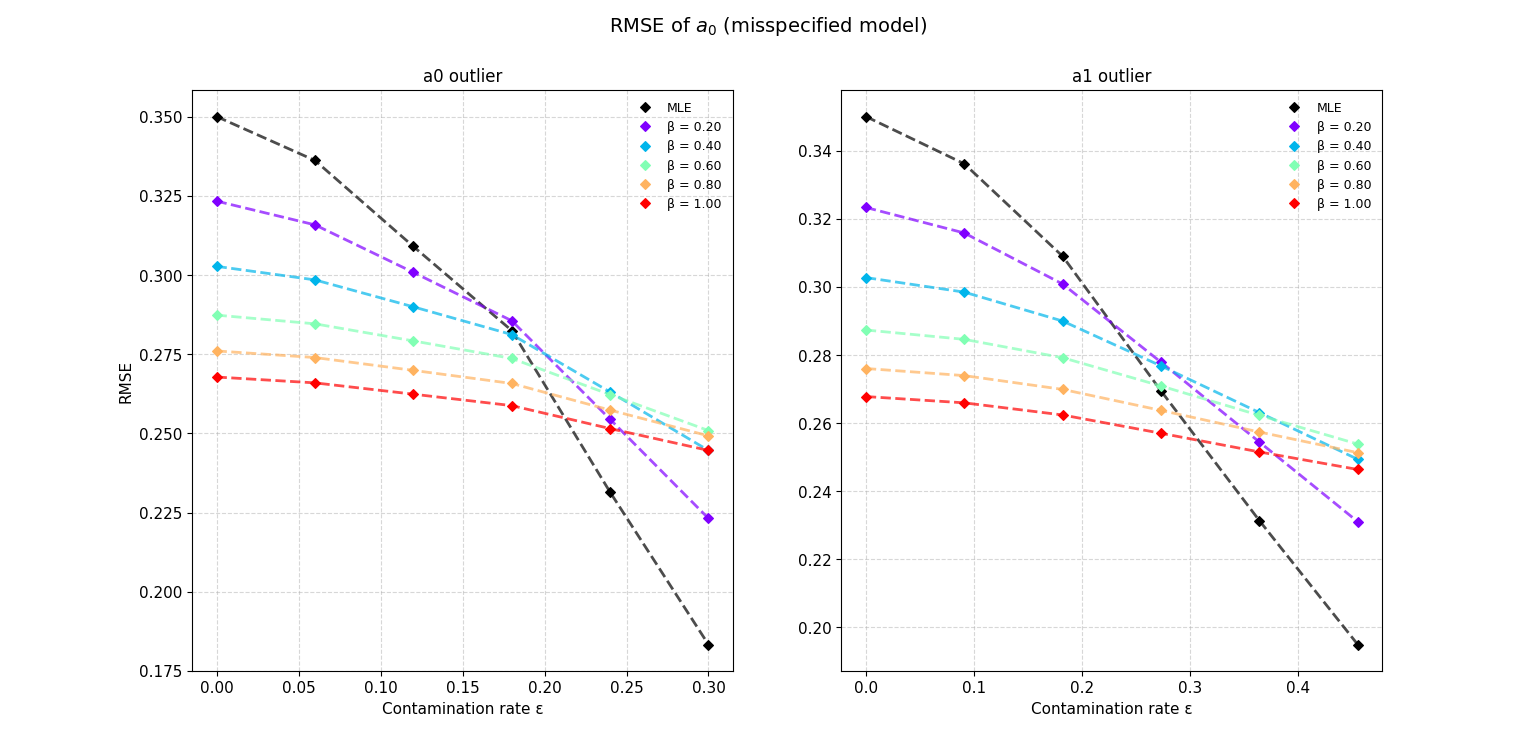}
    \caption{RMSE of the RMDPDE under the false model given contamination in $a_0$ (left) and in $a_1$ (right)}
    \label{fig:rmse_comparison_false}
\end{figure}

\subsection{Behavior of the Z-type Statistic}

In this section, we study the behavior of the Z-type statistic. To this end, we examine its ability to avoid rejecting the null hypothesis when it is true (empirical level) and its ability to reject it when it is false (empirical power). The empirical level is defined as the proportion of times the null hypothesis is rejected when it holds true, while the empirical power is the proportion of rejections when it is false. 

A total of $R=1000$ simulations were performed for both scenarios, resulting in $1000$ tests per case. The simulation setup, test conditions, and outlier generation methods remain consistent with the previous case, the null hypothesis is the one defined in \eqref{eq:hypothesis_constraint}. In the case of the Z-type statistic, it is important to note that the parameters are estimated without restrictions; subsequently, these restrictions are applied during the calculation of the Z-statistic, as shown in Definition \ref{def:def_z}.

As shown in Figure \ref{fig:empirical_level}, the test performs well in the absence of outliers, showing no significant differences across the various estimators. However, even with only a 10\% of contamination, the MLE exhibits a significantly higher rejection rate than the other estimators. In contrast, for values of $\beta = 0.8$ and $1$, the test demonstrates robustness against the presence of contamination.
\begin{figure}[H]
    \centering
     \includegraphics[width=0.9\textwidth]{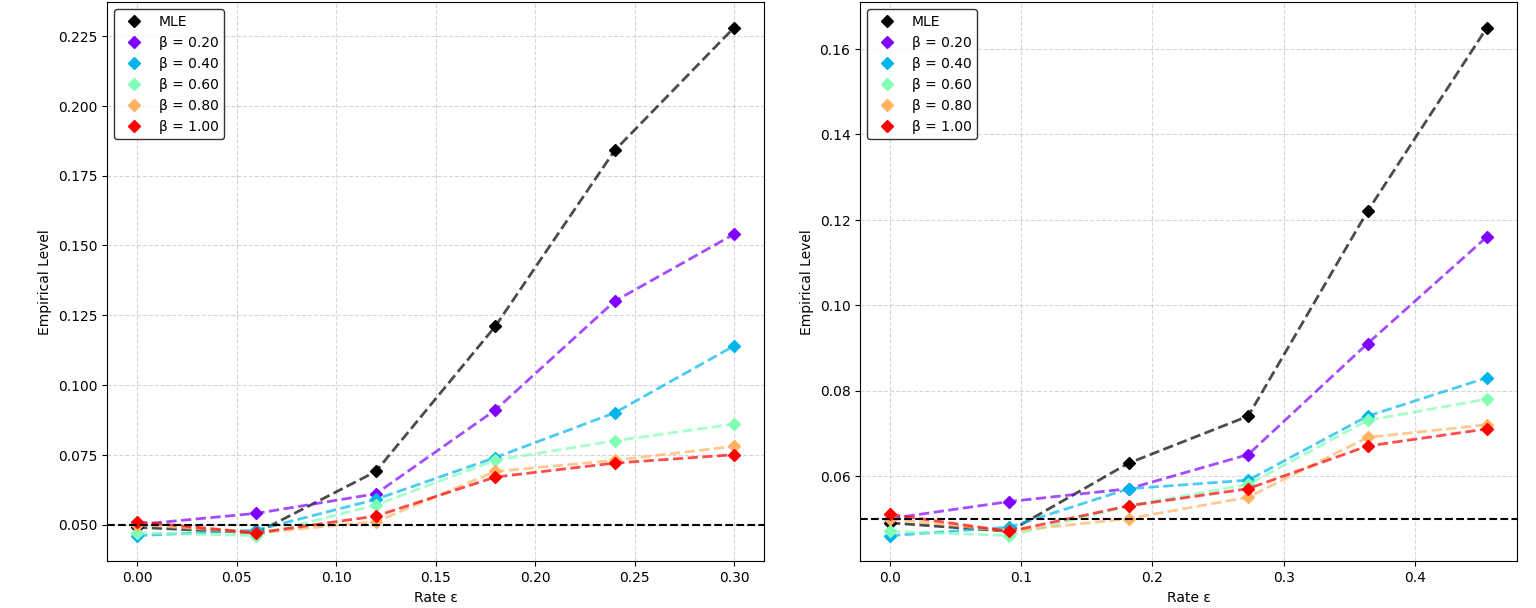}
    \caption{Empirical level of the Z-test for the two types of outliers considered (affecting $a_0$ and $a_1$).}
    \label{fig:empirical_level}
\end{figure}
Figure \ref{fig:empirical_power} displays the empirical power for each proportion of outliers. As observed, in the absence of outliers, the MLE exhibits the highest rejection rate. However, as the contamination rate increases, high values of $\beta$ continue to reject the null hypothesis at a considerable rate. In contrast, for $\beta = 0.2$ or the MLE, the power approaches nearly $5\%$; that is, the test behaves as if the null hypothesis were true.

It is necessary to specify that this outcome is not guaranteed in all scenarios. In our case, given that $a_1 = -1.4$ and outliers are included at the end of the experiment, it is as if some devices are measured to have lasted longer than expected. Consequently, the estimation of $a_1$ shifts toward a lower absolute value (approaching $a_1 = -1.15$). Nevertheless, since the true parameter is unknown in a real-world testing environment, this remains a perfectly plausible situation.
\begin{figure}[H]
    \centering
     \includegraphics[width=0.9\textwidth]{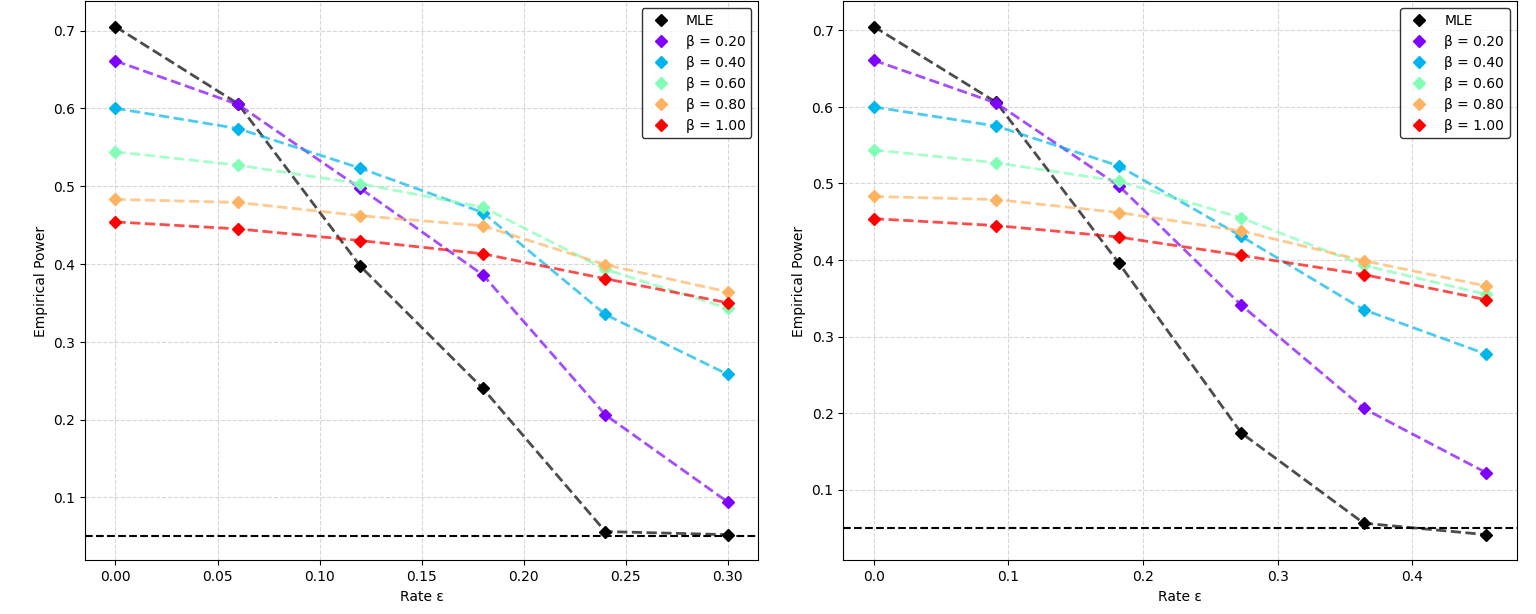}
    \caption{Empirical power of the Z-test for the two types of outliers considered (affecting $a_0$ and $a_1$).}
    \label{fig:empirical_power}
\end{figure}

Finally, an analysis was conducted as a function of the sample size. Under the same conditions, the number of devices was varied ($N=360, 720, \dots, 3600$), with $R=1000$ simulations performed for each case. Both the empirical level (true null hypothesis) and the empirical power (false null hypothesis) were obtained, considering scenarios both without outliers and with a $4\%$ outlier proportion.

Figure \ref{fig:Z_num_simulations} presents the empirical level and power as functions of the sample size under contaminated scenarios. The contamination levels are set to $\epsilon = 0.15$ for perturbations in $a_0$ and $\epsilon = 0.30$ for perturbations in $a_1$. As observed regarding the empirical level, there is no significant difference between the estimators in the absence of outliers. However, as the sample size increases, the empirical level for the MLE worsens considerably; that is, with a very large sample size, it becomes even more sensitive to the presence of outliers. As for the empirical power, the MLE performs best without outliers, though all estimators quickly converge to a power of $100\%$. In the presence of outliers, the MLE exhibits poor performance regardless of the increase in sample size, whereas the MDPDE improves significantly, especially for high values of $\beta$.
\begin{figure}[H]
    \centering
    \begin{minipage}{0.88\textwidth}
        \centering
        \begin{subfigure}[c]{0.48\linewidth}
            \centering
            $\vcenter{\hbox{\rotatebox[origin=c]{90}{\scriptsize Level}}}$%
            \hspace{4pt}%
            $\vcenter{\hbox{\includegraphics[width=0.88\linewidth]{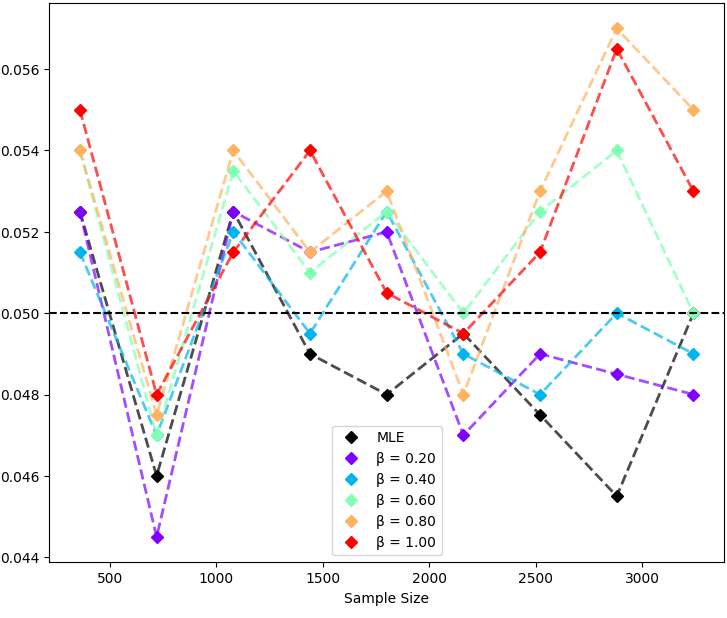}}}$
            \caption{Absence of contamination}
            \label{subfig:a1}
        \end{subfigure}\hfill
        \begin{subfigure}[c]{0.48\linewidth}
            \centering
            $\vcenter{\hbox{\rotatebox[origin=c]{90}{\scriptsize Power}}}$%
            \hspace{4pt}%
            $\vcenter{\hbox{\includegraphics[width=0.88\linewidth]{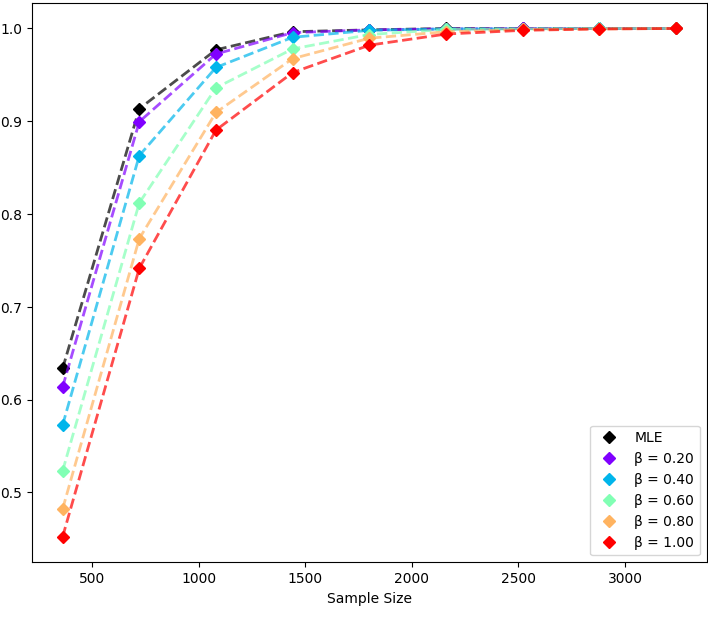}}}$
            \caption{Absence of contamination}
            \label{subfig:b1}
        \end{subfigure}

        \vspace{0.4cm}
        \begin{subfigure}[c]{0.48\linewidth}
            \centering
            $\vcenter{\hbox{\rotatebox[origin=c]{90}{\scriptsize Level}}}$%
            \hspace{4pt}%
            $\vcenter{\hbox{\includegraphics[width=0.88\linewidth]{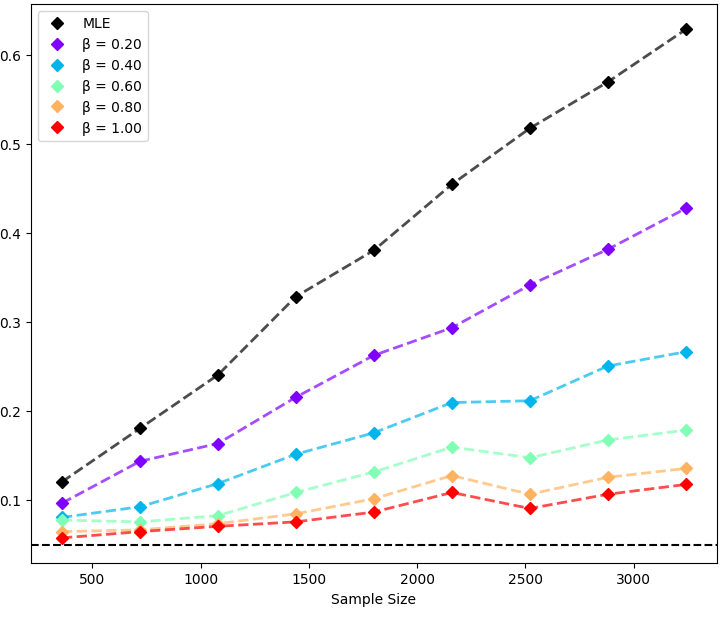}}}$
            \caption{Contamination in $a_0$}
            \label{subfig:c1}
        \end{subfigure}\hfill
        \begin{subfigure}[c]{0.48\linewidth}
            \centering
            $\vcenter{\hbox{\rotatebox[origin=c]{90}{\scriptsize Power}}}$%
            \hspace{4pt}%
            $\vcenter{\hbox{\includegraphics[width=0.88\linewidth]{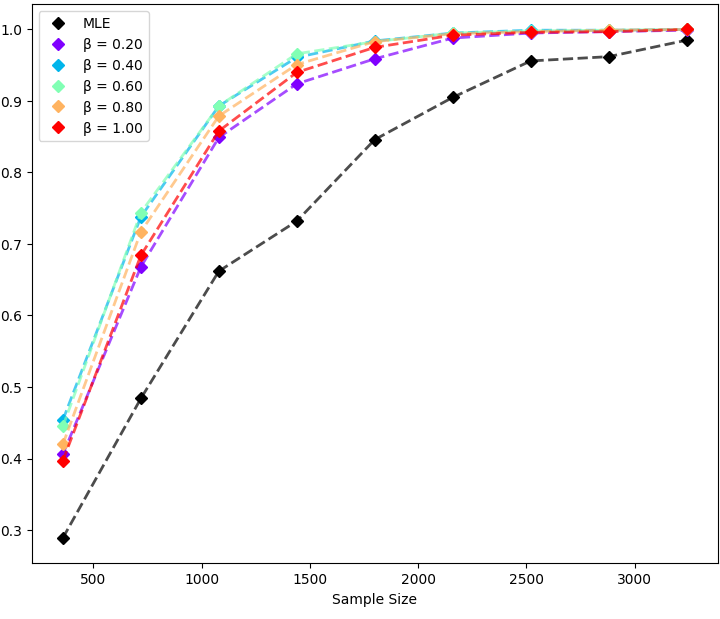}}}$
            \caption{Contamination in $a_0$}
            \label{subfig:d1}
        \end{subfigure}

        \vspace{0.4cm}
        \begin{subfigure}[c]{0.48\linewidth}
            \centering
            $\vcenter{\hbox{\rotatebox[origin=c]{90}{\scriptsize Level}}}$%
            \hspace{4pt}%
            $\vcenter{\hbox{\includegraphics[width=0.88\linewidth]{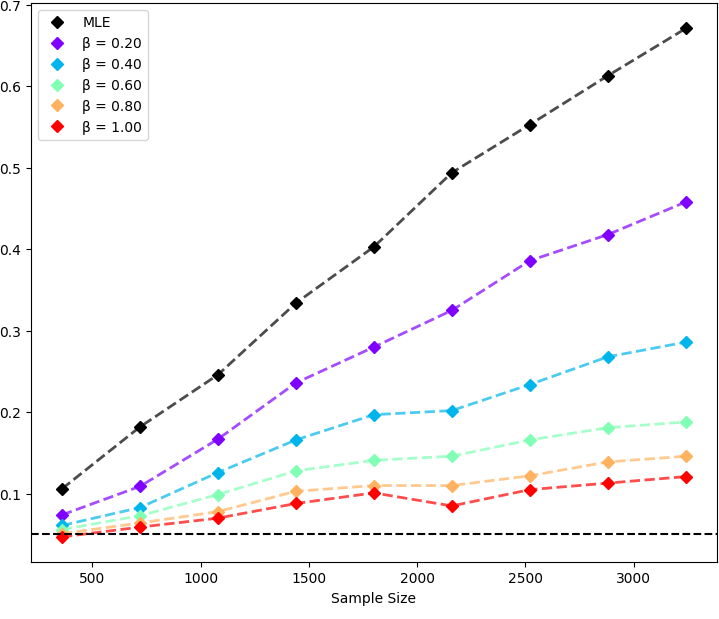}}}$
            \caption{Contamination in $a_1$}
            \label{subfig:e1}
        \end{subfigure}\hfill
        \begin{subfigure}[c]{0.48\linewidth}
            \centering
            $\vcenter{\hbox{\rotatebox[origin=c]{90}{\scriptsize Power}}}$%
            \hspace{4pt}%
            $\vcenter{\hbox{\includegraphics[width=0.88\linewidth]{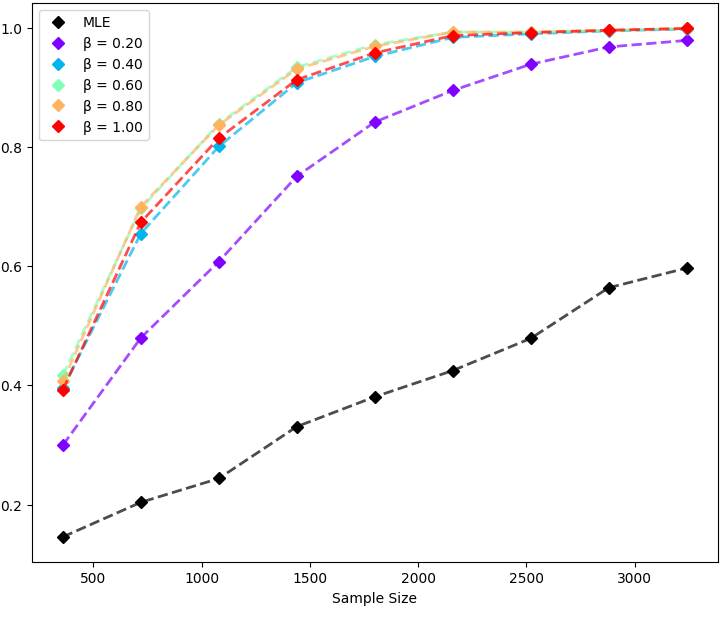}}}$
            \caption{Contamination in $a_1$}
            \label{subfig:f1}
        \end{subfigure}
    \end{minipage}
    \caption{Empirical level and power of the Z-test for the two types of outliers considered (affecting $a_0$ and $a_1$).}
    \label{fig:Z_num_simulations}
\end{figure}

\subsection{Behavior of the Rao-type Statistic}
In this section, the performance of the Rao-type statistic is analyzed. To this end, a simulation study similar to that conducted for the Z-type statistic is considered, using the same parameter settings and contamination schemes. However, two key differences must be highlighted. First, in the case of the Z-type statistic, the estimator employed is the unrestricted one, and the constraint is imposed directly within the test statistic. In contrast, the Rao-type statistic is constructed using the RMDPDE under the null hypothesis. Second, it is important to note that the Z-type statistic does not explicitly depend on the sample, whereas the Rao-type statistic does, since it is based on the expressions derived in equation \eqref{eq:score}.
As in the previous section, the empirical level and empirical power of the Rao-type statistic are analyzed under both uncontaminated and contaminated scenarios. The study follows the same simulation framework described for the Z-type statistic, allowing for a direct comparison between both approaches.

Figures \ref{fig:rao_level} and \ref{fig:rao_power} show the empirical level and power, respectively, for different contamination proportions affecting $a_0$ and $a_1$. The conclusions are consistent with those obtained for the Z-type statistic. In the absence of contamination, all estimators exhibit a satisfactory behavior, with empirical levels close to the nominal one and high power.

However, in the presence of contamination, the classical MLE becomes highly sensitive, leading to inflated rejection rates and a notable loss of power. In contrast, the MDPDE-based Rao-type statistics show a clear robustness, particularly for moderate to large values of $\beta$, maintaining stable levels and reasonable power even under contaminated scenarios.
\begin{figure}[H]
    \centering
     \includegraphics[width=0.9\textwidth]{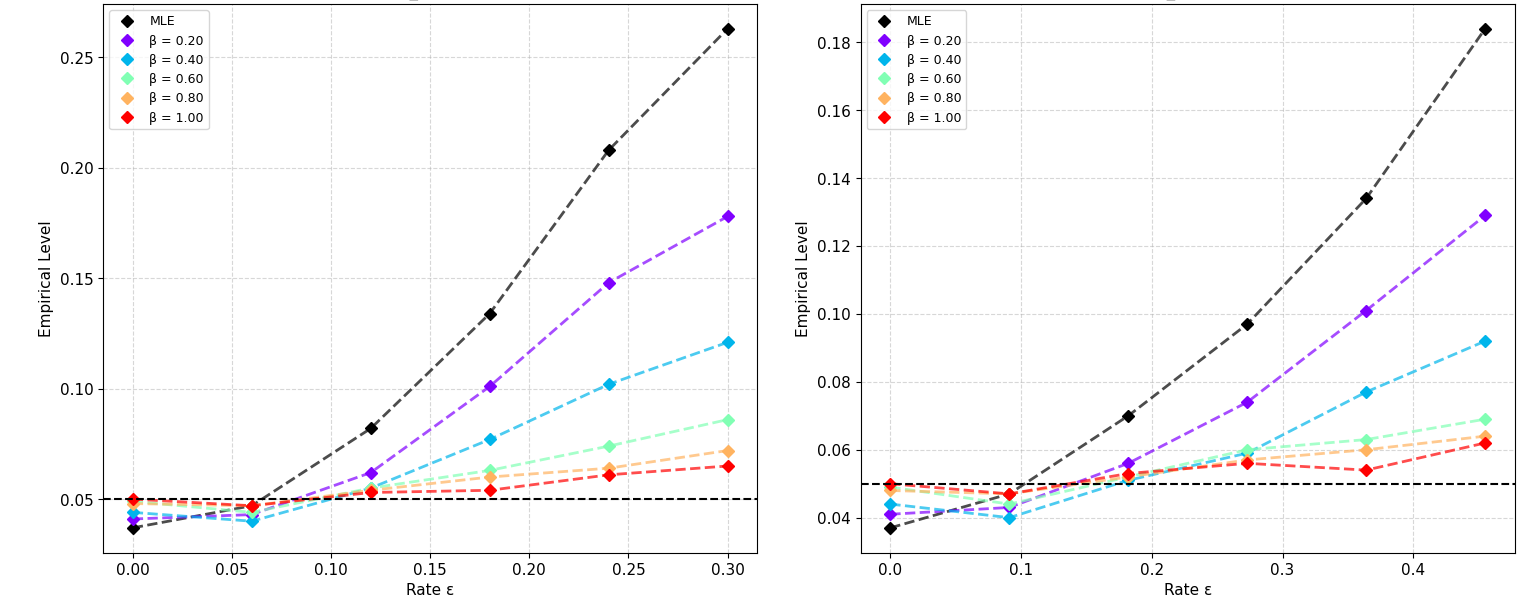}
    \caption{Empirical level of the Rao-test for the two types of outliers considered (affecting $a_0$ and $a_1$).}
    \label{fig:rao_level}
\end{figure}

\begin{figure}[H]
    \centering
     \includegraphics[width=0.9\textwidth]{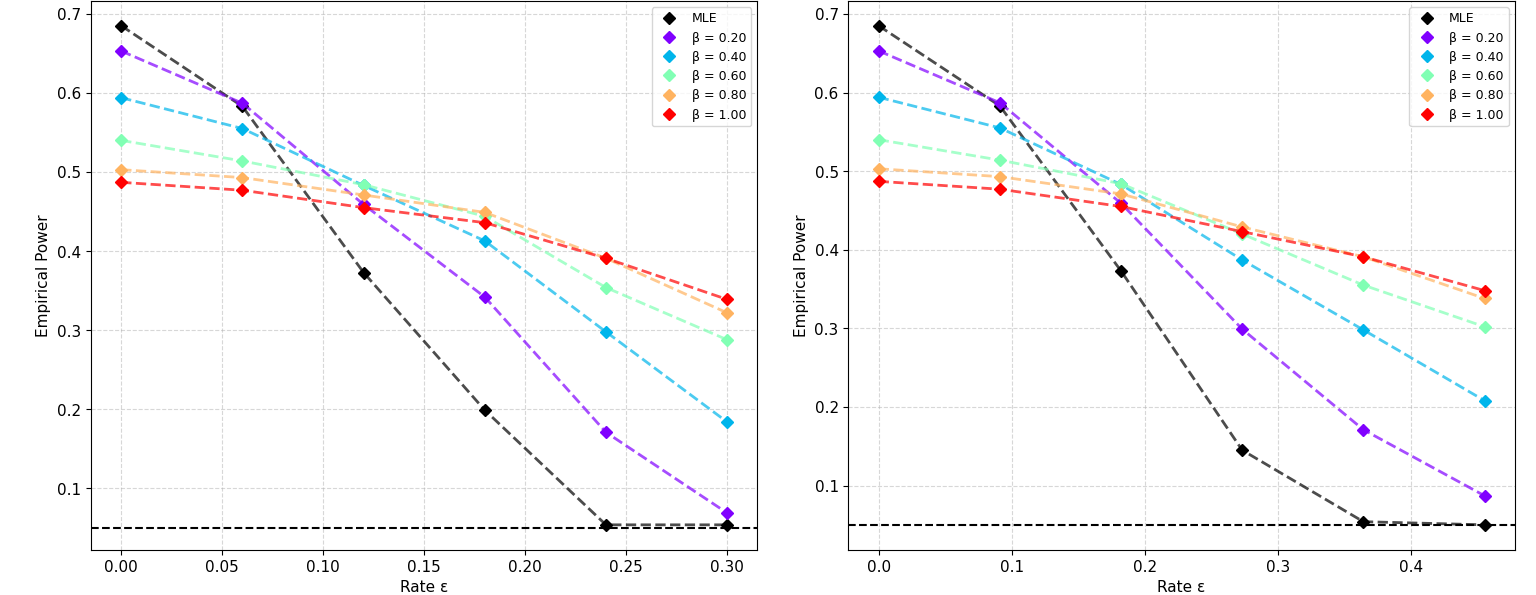}
    \caption{Empirical power of the Rao test for the two types of outliers considered (affecting $a_0$ and $a_1$).}
    \label{fig:rao_power}
\end{figure}
Finally, the effect of the sample size is also examined by computing both empirical level and power for increasing values of $N$. The results, shown in Figure \ref{fig:rao_num_simulations}, confirm the same pattern observed for the Z-type statistic: while the performance of the MLE deteriorates as the sample size increases in the presence of outliers, the robust estimators improve and tend to stabilize, achieving a better balance between level accuracy and power.
\begin{figure}[htbp]
    \centering
    \begin{minipage}{0.88\textwidth}
        \centering
        \begin{subfigure}[c]{0.48\linewidth}
            \centering
            $\vcenter{\hbox{\rotatebox[origin=c]{90}{\scriptsize Level}}}$%
            \hspace{4pt}%
            $\vcenter{\hbox{\includegraphics[width=0.88\linewidth]{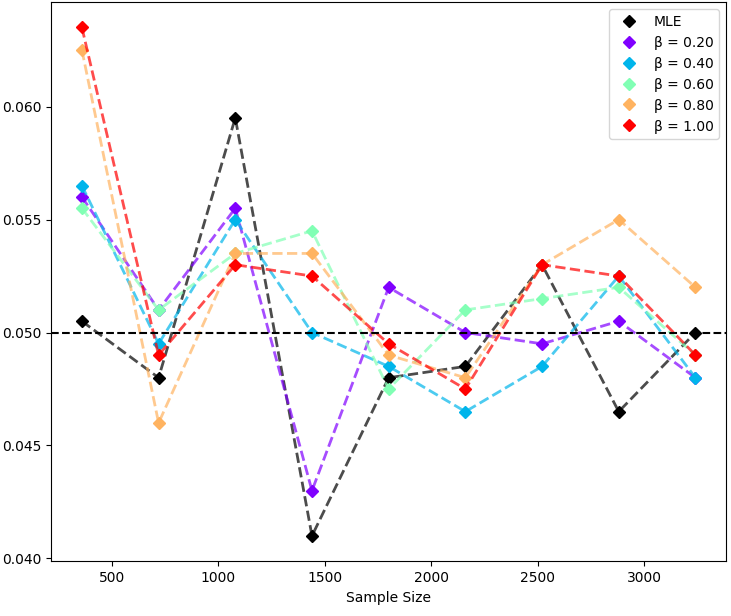}}}$
            \caption{Absence of contamination}
            \label{subfig:a2}
        \end{subfigure}\hfill
        \begin{subfigure}[c]{0.48\linewidth}
            \centering
            $\vcenter{\hbox{\rotatebox[origin=c]{90}{\scriptsize Power}}}$%
            \hspace{4pt}%
            $\vcenter{\hbox{\includegraphics[width=0.88\linewidth]{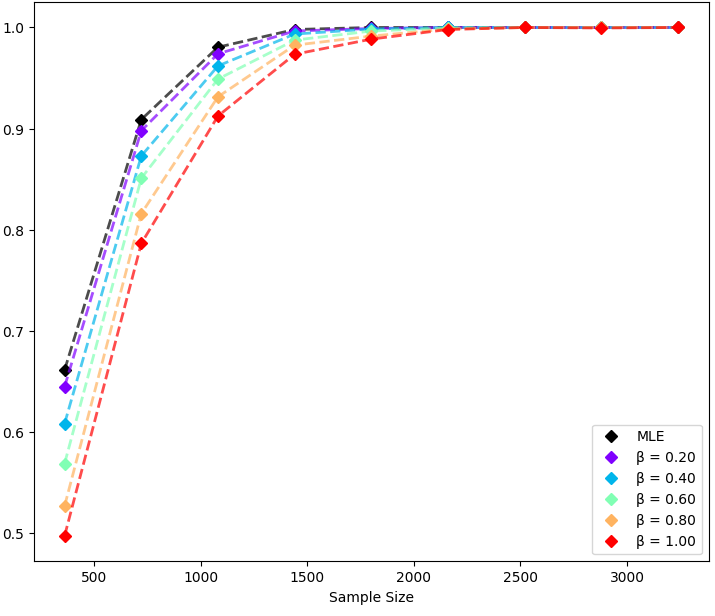}}}$
            \caption{Absence of contamination}
            \label{subfig:b2}
        \end{subfigure}

        \vspace{0.4cm}
        \begin{subfigure}[c]{0.48\linewidth}
            \centering
            $\vcenter{\hbox{\rotatebox[origin=c]{90}{\scriptsize Level}}}$%
            \hspace{4pt}%
            $\vcenter{\hbox{\includegraphics[width=0.88\linewidth]{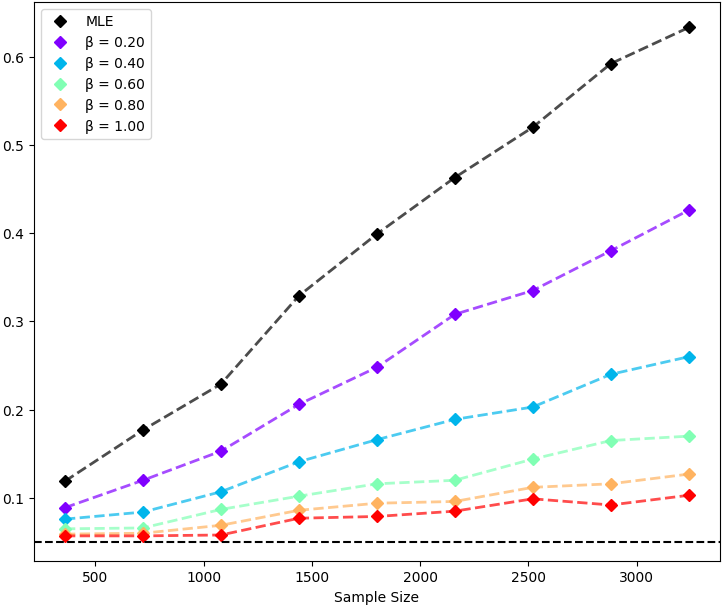}}}$
            \caption{Contamination in $a_0$}
            \label{subfig:c2}
        \end{subfigure}\hfill
        \begin{subfigure}[c]{0.48\linewidth}
            \centering
            $\vcenter{\hbox{\rotatebox[origin=c]{90}{\scriptsize Power}}}$%
            \hspace{4pt}%
            $\vcenter{\hbox{\includegraphics[width=0.88\linewidth]{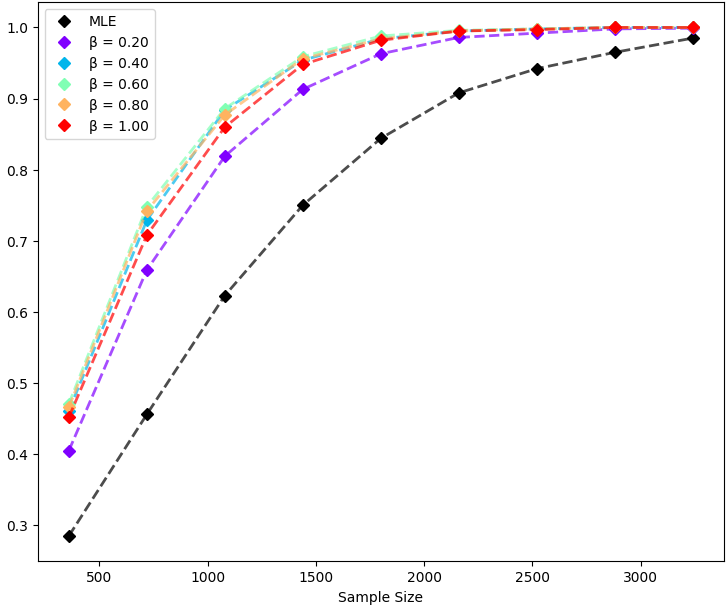}}}$
            \caption{Contamination in $a_0$}
            \label{subfig:d2}
        \end{subfigure}

        \vspace{0.4cm}
        \begin{subfigure}[c]{0.48\linewidth}
            \centering
            $\vcenter{\hbox{\rotatebox[origin=c]{90}{\scriptsize Level}}}$%
            \hspace{4pt}%
            $\vcenter{\hbox{\includegraphics[width=0.88\linewidth]{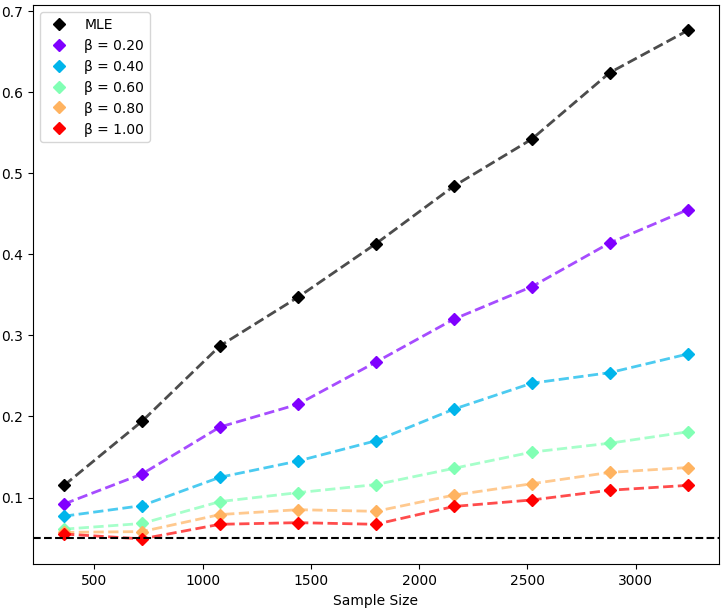}}}$
            \caption{Contamination in $a_1$}
            \label{subfig:e2}
        \end{subfigure}\hfill
        \begin{subfigure}[c]{0.48\linewidth}
            \centering
            $\vcenter{\hbox{\rotatebox[origin=c]{90}{\scriptsize Power}}}$%
            \hspace{4pt}%
            $\vcenter{\hbox{\includegraphics[width=0.88\linewidth]{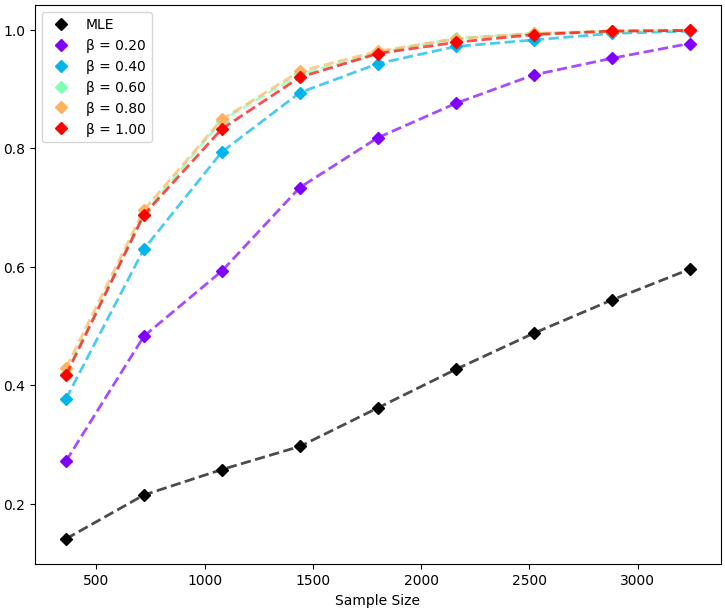}}}$
            \caption{Contamination in $a_1$}
            \label{subfig:f2}
        \end{subfigure}
    \end{minipage}
    \caption{Empirical level and power of the Rao test for the two types of outliers considered (affecting $a_0$ and $a_1$).}
    \label{fig:rao_num_simulations}
\end{figure}

\section{Real Data Analysis}\label{sec:sec8}

This section presents an analysis based on real-world data. Specifically, the failure times reported in Table 5.4 of \citet{zhu2010} are used. A SSALT was performed on lightbulbs, considering voltage as the accelerating stress factor.

The experimental setup is summarized as follows:
\begin{itemize}
    \item Sample Size ($N$): 64 devices.
    \item Total Test Duration: 140 hours.
    \item First Stress Interval ($I_1$): $[0, 96]$ hours at a voltage of 2.25 V.
    \item Second Stress Interval ($I_2$): $(96, 140]$ hours at a voltage of 2.44 V.
\end{itemize}

The observed failure times ($t_{i:N}$) from the step-voltage test are organized according to the stress intervals as follows:

During the first stress level ($V = 2.25$), a total of 34 failures  were observed; after the increase in stress level, 19 additional failures were recorded. A total of 11 devices survived the entire experiment and were right-censored at time $t = 140$.

\begin{itemize}
\item Failure Times ($t \leq 96$): 12.07, 14.00, 17.95, 19.50, 22.10, 23.11, 24.00, 24.00, 25.10, 26.46, 26.58, 26.90, 28.06, 34.00, 36.13, 36.64, 40.85, 41.11, 42.63, 44.10, 46.30, 52.51, 54.00, 58.09, 62.68, 64.17, 72.25, 73.13, 83.63, 86.90, 90.09, 91.22, 91.56, 94.38.

   \item Failure Times ($96 < t < 140$): 97.71, 101.53, 102.10, 105.10, 105.11, 109.20, 112.11, 114.40, 117.90, 119.58, 120.20, 121.90, 122.50, 123.60, 126.50, 126.95, 129.25, 130.10, 136.31.
\end{itemize}
In this case, the null hypothesis to be tested is
\[
H_0: a_1 = 0 \quad \text{versus} \quad H_1: a_1 \neq 0.
\]

That is, we want to test whether an increase in voltage affects the mean lifetime. In order to do this, both the Z-test and the Rao test, defined in Definitions \ref{def:def_z} and \ref{def:rao_test_def}, have been applied. Table \ref{tab:real_data_table} presents the results of the MDPDE and RMDPDE estimators, along with the corresponding test statistics and associated $p$-values for a significance level of $0.05$. 

\begin{table}[H]
\centering
\begin{tabular}{c c c c c c c c c}
\hline
$\beta$ & $\hat{a}_0$ & $\hat{a}_1$ & $\tilde{a}_0$ & $\tilde{a}_1$ & $Z$ & $p\text{-value}_Z$ & Rao & $p\text{-value}_{Rao}$ \\
\hline
0.0 & 17.196 & -5.474 & 4.614 & 0.0 & -3.608 & 0.000309 & 14.727 & 0.000124 \\
0.2 & 17.263& -5.510 & 4.529 & 0.0 & -3.638 & 0.000275 & 15.228 & 0.000095 \\
0.4 & 17.279 & -5.521 & 4.469 & 0.0 & -3.646 & 0.000266 & 15.398 & 0.000087 \\
0.6 & 17.190 & -5.487 & 4.431 & 0.0 & -3.616 & 0.000300 & 15.266 & 0.000093 \\
0.8 & 17.189 & -5.489 & 4.408 & 0.0 & -3.608 & 0.000308 & 14.921 & 0.000112 \\
1.0 & 17.189 & -5.490 & 4.394 & 0.0 & -3.59 & 0.000322 & 14.434 & 0.000145 \\
\hline
\end{tabular}
\caption{Estimation and Z and Rao test results}
\label{tab:real_data_table}
\end{table}
The results from both the Z-test and the Rao test consistently yield very small $p$-values across all values of $\beta$. Since all these values are well below the significance level of $0.05$, there is strong evidence to reject the null hypothesis $H_0: a_1 = 0$.Therefore, we conclude that voltage has a statistically significant effect on the mean lifetime. Furthermore, the estimates of $a_1$ are stable and negative for all values of $\beta$, suggesting a robust inverse relationship between voltage and lifetime. Finally, both testing procedures lead to consistent conclusions, supporting the reliability and robustness of the results obtained using the MDPDE and RMDPDE approaches.

\section{Conclusions}\label{sec:sec9}

In this article, we have continued the analysis of the robustness of the MDPDE within the framework of a SSALTs model under exponential lifetime distributions. Specifically, we have examined the restricted estimator, its asymptotic distribution, and hypothesis testing procedures based on the Z-type test, and the Rao-type test.

The theoretical and empirical results presented here demonstrate that the classical MLE, while optimal under ideal conditions, exhibits a marked lack of robustness in the presence of outliers or data contamination. In particular, the MLE suffers a significant deterioration in its ability to correctly control Type I and Type II errors: it tends to reject the null hypothesis when it is true, and to fail to reject it when it is false. Moreover, this degradation is not attenuated as the sample size grows; on the contrary, the empirical evidence shows that the negative effect of contamination is amplified with increasing number of observations, making the MLE increasingly unreliable in large-sample scenarios where contamination may be present. In contrast, the MDPDE, governed by the tuning parameter $\beta$, provides a flexible and effective framework for balancing statistical efficiency and robustness. For $\beta = 0$, the MDPDE reduces to the MLE, while for $\beta > 0$ it progressively downweights the influence of outlying observations, yielding estimators that remain stable under contamination. This robustness property naturally propagates to the associated hypothesis tests, which maintain their nominal significance levels and statistical power even in the presence of data contamination. Since hypothesis testing constitutes the foundation of statistical decision-making in reliability and accelerated life testing, the robustness of these tests is of paramount practical importance.

The results of this work therefore reinforce the suitability of the MDPDE-based inferential framework as a robust alternative to classical likelihood-based methods in SSALT models, particularly in applied settings where the presence of anomalous observations cannot be ruled out. Future research directions may include the extension of these results to more general lifetime distributions, the development of optimal data-driven strategies for the selection of the tuning parameter $\beta$, and the application of these robust methods to real-world accelerated life testing datasets.




\bibliographystyle{unsrtnat}
\bibliography{bibliography}  
\section{Appendix}

 \begin{pot1}
 Following from Theorem 2 \citet{Basu2018} and Theorem 2 \citet{jaenada2025} we consider 
 \begin{align*}
 h_N(\bm{a})&=\frac{1}{1+\beta}\left(h_{1}(a_0,a_1)+h_{2}(a_0,a_1)\right)
 \end{align*}
 and we have the following Taylor's expansion:
 \begin{align*}
 \left.\frac{\partial h_N(\bm{a})}{\partial \bm{a}} \right|_{\bm{a}=\tilde{\bm{a}}^\beta} 
&= \left. \frac{\partial h_N(\bm{a})}{\partial \bm{a}}  \right|_{\bm{a}=\bm{a}^*} 
+ \left. \frac{\partial^2 h_N(\bm{a})}{\partial \bm{a}^2} \right|_{\bm{a}=\bm{a}^*} 
\left(\tilde{\bm{a}}^\beta - \bm{a}^*\right)^{\intercal} + o(1) \\
&=\left. \frac{\partial h_N(\bm{a})}{\partial \bm{a}} \right|_{\bm{a}=\bm{a}^*} 
+ \bm{J}_\beta(\bm{a}^*) 
\left(\tilde{\bm{a}}^\beta - \bm{a}^*\right)^{\intercal} + o(1).
 \end{align*}
 This equality holds because
 \begin{align*}
 \frac{\partial h_N^2(\bm{a})}{\partial^2 \bm{a}}\big|_{\bm{a}=\bm{a}^*}&=\bm{J}_\beta(\bm{a}^*)\left(\tilde{\bm{a}}^\beta-\bm{a}^*\right)^{\intercal}+o(1).
 \end{align*}
 Taking expectation on the function derivative at the true parameter, we have
 \begin{align*}
 \mathbb{E}\!\left[ N^{\frac{1}{2}} 
\left. \frac{\partial h_N(\bm{a})}{\partial \bm{a}} \right|_{\bm{a}=\bm{a}^*} \right]=\bm{0}_2 & \quad \mathrm{Var}\left(N^{\frac{1}{2}} 
\left. \frac{\partial h_N(\bm{a})}{\partial \bm{a}} \right|_{\bm{a}=\bm{a}^*} \right)=\bm{K}_\beta(\bm{a}^*).
 \end{align*}
 And using the law of great numbers:
  \begin{align*}
   N^{\frac{1}{2}}\left.\frac{\partial h_N(\bm{a})}{\partial \bm{a}} \right|_{\bm{a}=\bm{a}^*} \xrightarrow[N \to \infty]{\mathcal{L}} \mathcal{N}\left(\bm{0},\bm{K}_{\beta}(\bm{a}^*)\right).
  \end{align*}
The RMDPDE is a minimum of a differentiable function with constraints. It must satisfy:
  \begin{align*}
  \begin{cases}
  N \frac{\partial h_N(\bm{a})}{\partial \bm{a}} + \bm{m} \tilde{\mu}_N&=\bm{0}_2
   \\
   \bm{m}^{\intercal}\bm{a}-d=0.
  \end{cases}
  \end{align*}
So then we have
  \begin{align*}
\left.  \frac{\partial h_N(\bm{a})}{\partial \bm{a}} \right|_{\bm{a}=\tilde{\bm{a}}}&=-N^{-1}\bm{m}\tilde{\mu}_N.
  \end{align*}
  Computing this the Taylor's expansion we have
  \begin{align*}
  \left.\frac{\partial h_N(\bm{a})}{\partial \bm{a}} \right|_{\bm{a}=\bm{a}^*} + \bm{J}_{\beta}(\bm{a}^*)\left(\tilde{\bm{a}}^{\intercal}-\bm{a}^*\right)^{\intercal}+N^{-1}\bm{m}\tilde{\mu}_N+o(1)=\bm{0}_2.
  \end{align*}
  We also have the following, multiplying by $N^{\frac{1}{2}}$ and reorganizing terms.
  \begin{align*}
  -N^{\frac{1}{2}} \left.\frac{\partial h_N(\bm{a})}{\partial \bm{a}} \right|_{\bm{a}=\bm{a}^*}&=\bm{J}_{\beta}(\bm{a}^*)N^{\frac{1}{2}}\left(\tilde{\bm{a}}^{\beta}-\bm{a}^*\right)^{\intercal}+\bm{m}N^{\frac{1}{2}}\tilde{\mu}_N.
  \end{align*}
  Noting that 
  \begin{align*}
  N^{\frac{1}{2}}\bm{m}^{\intercal}\left(\tilde{\bm{a}}^{\beta}-\bm{a}^*\right)&=N^{\frac{1}{2}}\bm{m}^{\intercal}\tilde{\bm{a}}^{\beta}-N^{\frac{1}{2}}\bm{m}^{\intercal}\bm{a}^*=N^{\frac{1}{2}}\bm{m}^{\intercal}\tilde{\bm{a}}^{\beta}-N^{\frac{1}{2}}\bm{d}
  \\
  &=N^{\frac{1}{2}}\left[\bm{m}^{\intercal}\tilde{\bm{a}}^{\beta}-d\right]=0.
  \end{align*}
  Combining the previous equations we can write
  \begin{align*}
\begin{pmatrix}
   \bm{J}_{\beta}(\bm{a}^*) & \bm{m}
   \\
   \bm{m}^{\intercal} & 0
     \end{pmatrix}	
  \begin{pmatrix}
  N^{\frac{1}{2}}\left(\tilde{\bm{a}}^{\beta}-\bm{a}^*\right)
  \\
  N^{-\frac{1}{2}}\tilde{\mu}_N
    \end{pmatrix}
    =
  \begin{pmatrix}
   -N^{\frac{1}{2}} \left.\frac{\partial h_N(\bm{a})}{\partial \bm{a}} \right|_{\bm{a}=\bm{a}^*}
   \\
   0
 \end{pmatrix},
  \end{align*}
  and inverting the first matrix by blocks
  \begin{align}\label{eq:eqmat}
  \begin{pmatrix}
   \bm{J}_{\beta}(\bm{a}^*) & \bm{m}
   \\
   \bm{m}^{\intercal} & 0
     \end{pmatrix}	^{-1}
     &=
   \begin{pmatrix}
   \bm{J}_{\beta}(\bm{a}^*)^{-1}+\bm{J}_{\beta}(\bm{a}^*)^{-1}\bm{m}\left(-\bm{m}^{\intercal}\bm{J}_{\beta}(\bm{a}^*)\bm{m}\right)^{-1}\bm{m}^{\intercal}\bm{J}_{\beta}(\bm{a}^*)^{-1} & \bm{J}_{\beta}(\bm{a}^*)^{-1}\bm{m}\left(\bm{m}^{\intercal}\bm{J}_{\beta}(\bm{a}^*)\bm{m}\right)^{-1}
   \\
   \left(\bm{m}^{\intercal}\bm{J}_{\beta}(\bm{a}^*)\bm{m}\right)^{-1}\bm{m}^{\intercal}\bm{J}_{\beta}(\bm{a}^*)^{-1}&   \left(\bm{m}^{\intercal}\bm{J}_{\beta}(\bm{a}^*)\bm{m}\right)^{-1}
     \end{pmatrix}  
     \\
     &=  
        \begin{pmatrix}
        \bm{P}_{\beta}(\bm{a}^*) &    \bm{Q}_{\beta}(\bm{a}^*) 
   \\
      \bm{Q}_{\beta}(\bm{a}^*) &    \bm{R}_{\beta}(\bm{a}^*) 
     \end{pmatrix}.
  \end{align}
  Finally, by the asymptotic convergence of $N^{\frac{1}{2}} \left.\frac{\partial h_N(\bm{a})}{\partial \bm{a}} \right|_{\bm{a}=\bm{a}^*}$ the results holds.
 \end{pot1}

 \begin{pot2}
 Under the null hypothesis we have:
 \begin{align*}
 \bm{m}^{\intercal}\hat{\bm{a}}^{\beta}-d=\bm{m}^{\intercal}\left(\hat{\bm{a}}^{\beta}-\bm{a}^{*}\right).
 \end{align*}
 Then, from the Theorem 2 \citet{jaenada2025}:
 \begin{align*}
 \sqrt{N}\left(\tilde{\bm{a}}^\beta - \bm{a}^*\right) \xrightarrow[N \to \infty]{\mathcal{L}} \mathcal{N}\left(\bm{0},\bm{J}_{\beta}\left(\hat{\bm{a}}^{\beta}\right)^{-1}\bm{K}_{\beta}\left(\hat{\bm{a}}^{\beta}\right)\bm{J}_{\beta}\left(\hat{\bm{a}}^{\beta}\right)^{-1}\right)
 \end{align*}
and evaluating at $\hat{\bm{a}}^{\beta}$ we have:
 \begin{align*}
\sqrt{N}\left(\bm{m}^{\intercal}\bm{J}_{\beta}\left(\hat{\bm{a}}^{\beta}\right)^{-1}\bm{K}_{\beta}\left(\hat{\bm{a}}^{\beta}\right)\bm{J}_{\beta}\left(\hat{\bm{a}}^{\beta}\right)^{-1}\bm{m}\right)^{-\frac{1}{2}}\left(\bm{m}^{\intercal}\hat{\bm{a}}^{\beta}-d\right). \xrightarrow[N \to \infty]{\mathcal{L}} \mathcal{N}\left(1,0\right)
 \end{align*}
 as $\hat{\bm{a}}^{\beta}$ is a consistent estimator of $\bm{a}^{*}$, the stated result follows from Slutsky's theorem.
\end{pot2}

\begin{pot3}
The power function is the probability of rejection:
\begin{align*}
B_{N}\left(\bm{a}^*\right) &=P\left(\left|Z_{N}\left(\hat{\bm{a}}^{\beta} \right) \right| \ge z_{\alpha/2} \left| \bm{a}=\bm{a}^* \right. \right)
\\
&=P\left(Z_{N}\left(\hat{\bm{a}}^{\beta} \right)  \le -z_{\alpha/2} \left| \bm{a}=\bm{a}^*  \right. \right) +P\left(Z_{N}\left(\hat{\bm{a}}^{\beta} \right)  \lg z_{\alpha/2} \left| \bm{a}=\bm{a}^*  \right. \right)
\\
&=P\left(
    \sqrt{\frac{N}{\bm{m}^{\intercal}\bm{J}_{\beta}\left(\hat{\bm{a}}_L\right)^{-1}\bm{K}_{\beta}\left(\hat{\bm{a}}_L\right)\bm{J}_{\beta}\left(\hat{\bm{a}}_L\right)^{-1}\bm{m}}} \bm{m}^{\intercal}{\left(\hat{\bm{a}}^{\beta}-\bm{a}^*\right)}
    \right.
  \\
  & \left.  \le -z_{\alpha/2} -\sqrt{\frac{N}{\bm{m}^{\intercal}\bm{J}_{\beta}\left(\hat{\bm{a}}_L\right)^{-1}\bm{K}_{\beta}\left(\hat{\bm{a}}_L\right)\bm{J}_{\beta}\left(\hat{\bm{a}}_L\right)^{-1}\bm{m}}}\left(\bm{m}^{\intercal} \bm{a}^*-d\right)
\right)
\\
+&1-P\left(
    \sqrt{\frac{N}{\bm{m}^{\intercal}\bm{J}_{\beta}\left(\hat{\bm{a}}_L\right)^{-1}\bm{K}_{\beta}\left(\hat{\bm{a}}_L\right)\bm{J}_{\beta}\left(\hat{\bm{a}}_L\right)^{-1}\bm{m}}} \bm{m}^{\intercal}{\left(\hat{\bm{a}}^{\beta}-\bm{a}^*\right)}
    \right.
  \\
  & \left.  \ge z_{\alpha/2} -\sqrt{\frac{N}{\bm{m}^{\intercal}\bm{J}_{\beta}\left(\hat{\bm{a}}_L\right)^{-1}\bm{K}_{\beta}\left(\hat{\bm{a}}_L\right)\bm{J}_{\beta}\left(\hat{\bm{a}}_L\right)^{-1}\bm{m}}}\left(\bm{m}^{\intercal} \bm{a}^*-d\right),
\right)
\end{align*}
as $\hat{\bm{a}}^{\beta} \xrightarrow{p}\bm{a}^*$ and $\sqrt{N}\left(\hat{\bm{a}}^{\beta}-\bm{a}^*\right)\xrightarrow[N \to \infty]{\mathcal{L}}\mathcal{N}\left(\bm{0},\bm{J}_{\beta}\left(\hat{\bm{a}}_L\right)^{-1}\bm{K}_{\beta}\left(\hat{\bm{a}}_L\right)\bm{J}_{\beta}\left(\hat{\bm{a}}_L\right)^{-1}\right)$, the results follows from Slutsky's theorem.
\end{pot3}

\begin{pot4}
We can rewrite
\begin{align*}
\bm{m}^{\intercal}\hat{\bm{a}}^{\beta}-d&=\bm{m}^{\intercal}\bm{a}_L-d+\bm{m}^{\intercal}\left(\hat{\bm{a}}^{\beta}-\bm{a}_L\right)
\\
&=\frac{1}{\sqrt{N}}\bm{m}^{\intercal}\bm{\ell}+\bm{m}^{\intercal}\left(\hat{\bm{a}}^{\beta}-\bm{a}_L\right),
\end{align*}
and then 
\begin{align*}
\sqrt{N}\left(\bm{m}^{\intercal}\hat{\bm{a}}^{\beta}-d\right)=\bm{m}^{\intercal}\bm{\ell}+\bm{m}^{\intercal}\sqrt{N}\left(\hat{\bm{a}}^{\beta}-\bm{a}_L\right).
\end{align*}
From Theorem 2 \citet{jaenada2025} we have:
\begin{align*}
 \sqrt{N}\left(\tilde{\bm{a}}^\beta - \bm{a}_L\right) \xrightarrow[N \to \infty]{\mathcal{L}} \mathcal{N}\left(\bm{0},\bm{J}_{\beta}\left(\hat{\bm{a}}_L\right)^{-1}\bm{K}_{\beta}\left(\hat{\bm{a}}_L\right)\bm{J}_{\beta}\left(\hat{\bm{a}}_L\right)^{-1}\right).
\end{align*}
Then
\begin{align*}
 \sqrt{N}\left(\bm{m}^{\intercal}\hat{\bm{a}}^\beta - d\right) \xrightarrow[N \to \infty]{\mathcal{L}} \mathcal{N}\left(\bm{m}^{\intercal}\bm{\ell},\bm{m}^{\intercal}\bm{J}_{\beta}\left(\hat{\bm{a}}_L\right)^{-1}\bm{K}_{\beta}\left(\hat{\bm{a}}_L\right)\bm{J}_{\beta}\left(\hat{\bm{a}}_L\right)^{-1}\bm{m}\right).
\end{align*}
So
\begin{align*}
\frac{\sqrt{N}\left(\bm{m}^{\intercal}\hat{\bm{a}}^{\beta}-d\right)-\bm{m}^{\intercal}\bm{\ell}}{\sqrt{\bm{m}^{\intercal}\bm{J}_{\beta}\left(\hat{\bm{a}}_L\right)^{-1}\bm{K}_{\beta}\left(\hat{\bm{a}}_L\right)\bm{J}_{\beta}\left(\hat{\bm{a}}_L\right)^{-1}\bm{m}}} \xrightarrow[N \to \infty]{\mathcal{L}} \mathcal{N}\left(0,1\right).
\end{align*}
As $\hat{\bm{a}}^{\beta} \xrightarrow{p} \bm{a}_L$, the result follows from Slutsky's theorem evaluating the variance at $\hat{\bm{a}}^{\beta}$.
\end{pot4}

\begin{pot5}
It is shown in proof of Theorem \ref{th:theo_2} that
 \begin{align*}
 \mathbb{E}\!\left[ N^{\frac{1}{2}} 
\left. \frac{\partial h_N(\bm{a})}{\partial \bm{a}} \right|_{\bm{a}=\bm{a}^*} \right]=\bm{0}_2 & \quad \mathrm{Var}\left(N^{\frac{1}{2}} 
\left. \frac{\partial h_N(\bm{a})}{\partial \bm{a}} \right|_{\bm{a}=\bm{a}^*} \right)=\bm{K}_\beta(\bm{a}^*).
 \end{align*}
 So the result follows from the asymptotic theory.
\end{pot5}

\begin{pot6}
The RMDPDE, $\tilde{\bm{a}}^{\beta}$ is the minimum of the DPD loss restricted to $\bm{m}^{\intercal}\bm{a}=d$. It satisfies the lagrangian multipliers equations
\begin{align*}
\left. \frac{\partial h_N(\bm{a})}{\partial \bm{a}} \right|_{\bm{a}=\tilde{\bm{a}}^{\beta}}-\bm{m}\tilde{\bm{\mu}} =U_{\beta,N}\left(\tilde{\bm{a}}\right)-\bm{m}\tilde{\bm{\mu}}=0.
\end{align*}
Then we can write 
\begin{align}\label{eq:eqUblam}
-U_{\beta,N}\left(\tilde{\bm{a}}\right)=\bm{m}\tilde{\bm{\mu}}.
\end{align}
So:
\begin{align*}
U_{\beta,N}\left(\tilde{\bm{a}}\right)^{\intercal}Q_{\beta}\left(\tilde{\bm{a}}^{\beta}\right)&=\tilde{\bm{\mu}}\bm{m}^{\intercal}Q_{\beta}\left(\tilde{\bm{a}}^{\beta}\right)
\\
&=\tilde{\bm{\mu}}\bm{m}^{\intercal}J_{\beta}\left(\tilde{\bm{a}}^{\beta}\right)^{-1}\bm{m}\left(\bm{m}^{\intercal}J_{\beta}\left(\tilde{\bm{a}}^{\beta}\right)^{-1}\bm{m}\right)^{-1}=\tilde{\bm{\mu}}.
\end{align*}
The Rao-type statistic can then be rewritten as:
\begin{align*}
R_{\beta,N}\left(\tilde{\bm{a}}^{\beta}\right)&=N\, U_{\beta,N}\left(\tilde{\bm{a}}^{\beta}\right)^{\intercal}Q_{\beta}\left(\tilde{\bm{a}}^{\beta}\right)\left[ Q_{\beta}\left(\tilde{\bm{a}}^{\beta}\right)^{\intercal}K_{\beta}\left(\tilde{\bm{a}}^{\beta}\right)Q_{\beta}\left(\tilde{\bm{a}}^{\beta}\right)\right]^{-1}Q_{\beta}\left(\tilde{\bm{a}}^{\beta}\right)^{\intercal}U_{\beta}\left(\tilde{\bm{a}}^{\beta}\right)
\\
&=N \tilde{\bm{\mu}}\left[ Q_{\beta}\left(\tilde{\bm{a}}^{\beta}\right)^{\intercal}K_{\beta}\left(\tilde{\bm{a}}^{\beta}\right)Q_{\beta}\left(\tilde{\bm{a}}^{\beta}\right)\right]^{-1}\tilde{\bm{\mu}}.
\end{align*}
Let us find the asymptotic distribution of $\tilde{\bm{\mu}}$ in order to obtain the asymptotic distribution of the Rao-test statistic. The second order Taylor expansion of the function $U_{\beta,N}\left(\bm{a}\right)$ around the true parameter value $\bm{a}^*$ is:
\begin{align*}
U_{\beta,N}\left(\tilde{\bm{a}}^{\beta}\right)=U_{\beta,N}\left(\bm{a}^*\right)+\left.\frac{\partial U_{\beta,N}\left(\bm{a}\right)}{\partial \bm{a}}\right|_{\bm{a}=\bm{a}^*}\left(\tilde{\bm{a}}^{\beta}-\bm{a}^*\right)+o\left(\left\|\tilde{\bm{a}}^{\beta}-\bm{a}^*\right\|\bm{1}_{2}\right).
\end{align*}
And since the proof of \ref{th:theo_1} 
\begin{align*}
\left.\frac{\partial U_{\beta,N}\left(\bm{a}\right)}{\partial \bm{a}}\right|_{\bm{a}=\bm{a}^*}=\left.\frac{\partial^2 h_{N}\left(\bm{a}\right)}{\partial \bm{a}^2}\right|_{\bm{a}=\bm{a}^*}=\bm{J}_{\beta}\left(\bm{a}^*\right)+o(1).
\end{align*}
So:
\begin{align*}
\left.\frac{\partial U_{\beta,N}\left(\bm{a}\right)}{\partial \bm{a}}\right|_{\bm{a}=\bm{a}^*} \xrightarrow[N \to \infty]{p} \bm{J}_{\beta}\left(\bm{a}^*\right).
\end{align*}
Therefore,
\begin{align*}
U_{\beta,N}\left(\tilde{\bm{a}}^{\beta}\right)=U_{\beta,N}\left(\bm{a}^*\right)+\bm{J}_{\beta}\left(\bm{a}^*\right)\left(\tilde{\bm{a}}^{\beta}-\bm{a}^*\right)+o\left(\left\|\tilde{\bm{a}}^{\beta}-\bm{a}^*\right\|\bm{1}_{2}\right)+o(1).
\end{align*}
Taking into account:
\begin{align*}
\bm{m}^{\intercal}\tilde{\bm{a}}^{\beta}-d=\bm{m}^{\intercal}\left(\tilde{\bm{a}}^{\beta}-\bm{a}^*\right)=0,
\end{align*}
and using equation \eqref{eq:eqUblam} we rewrite
\begin{align*}
-U_{\beta,N}\left(\bm{a}^*\right)+o\left(\left\|\tilde{\bm{a}}^{\beta}-\bm{a}^*\right\|\bm{1}_{2}\right)+o(1)=-U_{\beta,N}\left(\tilde{\bm{a}}^{\beta}\right)+\bm{J}_{\beta}\left(\bm{a}^*\right)\left(\tilde{\bm{a}}^{\beta}-\bm{a}^*\right)=\bm{m}\tilde{\bm{\mu}}+\bm{J}_{\beta}\left(\bm{a}^*\right)\left(\tilde{\bm{a}}^{\beta}-\bm{a}^*\right)
\end{align*}
We can write the two equations above in matrix form:
\begin{align*}
\begin{pmatrix}
   \bm{J}_{\beta}(\bm{a}^*) & \bm{m}
   \\
   \bm{m}^{\intercal} & 0
     \end{pmatrix}	
     \begin{pmatrix}
   \tilde{\bm{a}}^{\beta}-\bm{a}^*
   \\
   \tilde{\bm{\mu}}
     \end{pmatrix}	
     =
          \begin{pmatrix}
  -U_{\beta,N}\left(\bm{a}^*\right)
   \\
  0
     \end{pmatrix}	
     +
          \begin{pmatrix}
   o\left(\left\|\tilde{\bm{a}}^{\beta}-\bm{a}^*\right\|\bm{1}_{2}\right)+o(1)
   \\
  0
     \end{pmatrix}.
\end{align*}
and solving the equation we have
\begin{align*}
\begin{pmatrix}
   \bm{J}_{\beta}(\bm{a}^*) & \bm{m}
   \\
   \bm{m}^{\intercal} & 0
     \end{pmatrix}	^{-1}
\left(
          \begin{pmatrix}
  -U_{\beta,N}\left(\bm{a}^*\right)
   \\
  0
     \end{pmatrix}	
     +
          \begin{pmatrix}
   o\left(\left\|\tilde{\bm{a}}^{\beta}-\bm{a}^*\right\|\bm{1}_{2}\right)+o(1)
   \\
  0
     \end{pmatrix}	
\right)
     =
          \begin{pmatrix}
   \tilde{\bm{a}}^{\beta}-\bm{a}^*
   \\
   \tilde{\bm{\mu}}
     \end{pmatrix}	.
\end{align*}

We have from \eqref{eq:eqmat}:
  \begin{align*}
  \begin{pmatrix}
   \bm{J}_{\beta}(\bm{a}^*) & \bm{m}
   \\
   \bm{m}^{\intercal} & 0
     \end{pmatrix}	^{-1}
     &=  
        \begin{pmatrix}
        \bm{P}_{\beta}(\bm{a}^*) &    \bm{Q}_{\beta}(\bm{a}^*) 
   \\
      \bm{Q}_{\beta}(\bm{a}^*) &    \bm{R}_{\beta}(\bm{a}^*) 
     \end{pmatrix}.
  \end{align*}
  From theorem \ref{th:theo_5} we have the following asymptotic distribution
  \begin{align*}
            \begin{pmatrix}
  \sqrt{N}U_{\beta,N}\left(\bm{a}^*\right)
   \\
  0
     \end{pmatrix}	
     \xrightarrow[N \to \infty]{\mathcal{L}} 
     \mathcal{N}\left(\bm{0}_3,
             \begin{pmatrix}
        \bm{K}_{\beta}(\bm{a}^*) &    \bm{0}
   \\
      \bm{0}^{\intercal} &    \bm{0}
     \end{pmatrix}
     \right).
  \end{align*}
  We have then
    \begin{align*}
            \begin{pmatrix}
  \sqrt{N}\left(\tilde{\bm{a}}^{\beta}-\bm{a}^*\right)
   \\
  \sqrt{N} \tilde{\bm{\mu}}
     \end{pmatrix}	
     \xrightarrow[N \to \infty]{\mathcal{L}} 
     \mathcal{N}\left(\bm{0}_3,\bm{V}_{\beta}\left(\bm{a}^*\right)
     \right).
  \end{align*}
 With
 \begin{align*}
 \bm{V}_{\beta}\left(\bm{a}^*\right)=
        \begin{pmatrix}
        \bm{P}_{\beta}(\bm{a}^*) &    \bm{Q}_{\beta}(\bm{a}^*) 
   \\
      \bm{Q}_{\beta}(\bm{a}^*) &    \bm{R}_{\beta}(\bm{a}^*) 
     \end{pmatrix}
                  \begin{pmatrix}
        \bm{K}_{\beta}(\bm{a}^*) &    \bm{0}
   \\
      \bm{0}^{\intercal} &    \bm{0}
     \end{pmatrix}
            \begin{pmatrix}
        \bm{P}_{\beta}(\bm{a}^*) &    \bm{Q}_{\beta}(\bm{a}^*) 
   \\
      \bm{Q}_{\beta}(\bm{a}^*) &    \bm{R}_{\beta}(\bm{a}^*) 
     \end{pmatrix}.
 \end{align*}
 Then, the asymptotic distribution of $\sqrt{N}\tilde{\bm{\mu}}$ is
   \begin{align*}
   \sqrt{N}\tilde{\bm{\mu}} \xrightarrow[N \to \infty]{\mathcal{L}} \mathcal{N}\left(0,\bm{Q}_{\beta}(\bm{a}^*)^{\intercal} \bm{K}_{\beta}(\bm{a}^*) \bm{Q}_{\beta}(\bm{a}^*)\right).
   \end{align*}
   Using the convergence and consistency of RMDPDE it follows the distribution of $R_{\beta,N}\left(\tilde{\bm{a}}^{\beta}\right)$ because
   \begin{align*}
    \sqrt{N}\tilde{\bm{\mu}}\left[ \bm{Q}_{\beta}\left(\tilde{\bm{a}}^{\beta}\right)^{\intercal}\bm{K}_{\beta}\left(\tilde{\bm{a}}^{\beta}\right)\bm{Q}_{\beta}\left(\tilde{\bm{a}}^{\beta}\right)\right]^{-\frac{1}{2}} \xrightarrow[N \to \infty]{\mathcal{L}} \mathcal{N}\left(0,1\right).  
   \end{align*}
\end{pot6}

\end{document}